\let\NAT@parse\undefined
\def\BibTeX{{\rm B\kern-.05em{\sc i\kern-.025em b}\kern-.08em
    T\kern-.1667em\lower.7ex\hbox{E}\kern-.125emX}}
\theoremstyle{remark}% default
\newtheorem{lemma}{Lemma}
\newtheorem{proposition}{Proposition}
\newcommand{\CC}[1]{{\color{ForestGreen} (CC: #1)}}
\newcommand{\IX}[1]{{\color{Melon} (IX: #1)}}
\newcommand{\edit}[1]{{\color{blue}#1}}
\newcommand{\squeezeup}{\vspace{-2.5mm}}
\begin{document}

\title{\LARGE \bf Spiking Nonlinear Opinion Dynamics (S-NOD)\\for Agile Decision-Making}

\author{Charlotte Cathcart$^1$, Ian Xul Belaustegui$^1$, Alessio Franci$^{2\dagger}$, and Naomi Ehrich Leonard$^{1\dagger}$
\thanks{$^1$Mechanical and Aerospace Engineering, Princeton University, Princeton, NJ, USA. %\newline
{\tt {\{cathcart, ianxul, naomi\}@princeton.edu}}}
\thanks{$^2$Electrical Engineering and Computer Science, University of Liège, Liège, Belgium and WEL Research Inst., Wavre, Belgium. 
{\tt afranci@uliege.be}
}
\thanks{This research has been supported by ONR grant N00014-19-1-2556.}
\thanks{$^\dagger$These authors contributed equally to this work.}
}
\maketitle
%%%%ABSTRACT %%%%%%%%%%%%%%%%%%%%%%%%%%%%
\begin{abstract}
We present, analyze, and illustrate a first-of-its-kind model of two-dimensional excitable (spiking) dynamics for decision-making over two options. The model, Spiking Nonlinear Opinion Dynamics (S-NOD), provides superior agility, characterized by fast, flexible, and adaptive response to rapid and unpredictable changes in context, environment, or information received about available options.  
%We present Nonlinear Opinion Dynamics (\edit{S-NOD}), which describe opinion-forming and decision-making behavior with superior ``agility'' in responding and adapting to fast and unpredictable changes in context, environment, or information received about available options. 
 S-NOD derives through the introduction of a single extra term to the previously presented Nonlinear Opinion Dynamics (NOD) for fast and flexible multi-agent decision-making behavior. The extra term is inspired by the fast-positive, slow-negative mixed-feedback structure of excitable systems. The agile behaviors brought about by the new excitable nature of decision-making driven by S-NOD are analyzed in a general setting and  illustrated in an application to multi-robot navigation around human movers.
\end{abstract}
%%%% INTRODUCTION %%%%%%%%%%%%%%%%%%%%%%%

\section{Introduction}\label{sec: newintro}
%\NL{Naomi to take first pass at explaining how flexible is adaptable but not adaptive. And how excitability adds the adaptive piece. Then agile equals fast+flexible+adaptive. Fast and flexible comes from the "second" positive feedback. Excitability (through the slow negative feedback) gives us the adaptive piece. Also the multidimensional aspect of the spiking. Net result is agile decision-making among multiple options. This is very new.}
%\AF{cortical column model--add something brief here?}

The fast, flexible, and adaptive behavior observed in biology owes much to the excitable (spiking) nature of cellular signaling~\cite{dayan2005theoretical,Balazsi2011,suel2006excitable,fromm2007electrical}. Models of excitability %are analog representations of 
represent the analog molecular and/or biophysical processes that produce spikes in response to stimuli. These models inherit  %agility  (fast and adaptive behavior) that is}  % and efficiency,}
the adaptive behavior of analog systems and the reliability of digital systems, foundational to %and the recently proposed theory of 
spiking control systems~\cite{sepulchre2022spiking} and neuromorphic engineering~\cite{bartolozzi2022embodied}. However, existing models describe single-input/single-output~\cite{franci2019sensitivity} spike-based signal processing. %They describe 
This is spiking activity that can only encode single-option decisions: to spike or not spike, as determined by the input signal pushing the system toward its excitability threshold~\cite{ref:Sepulchre_ExcitableBehaviorsChapter2018}. This limits the use of these models in studying and designing spiking decision-making over \textit{multiple} options as observed, e.g., in cortical column visual orientation selectivity~\cite{Priebe2016} or in sensorimotor decision-making~\cite{gallivan2018decision}.

We present a generalized model of excitable (spiking) dynamics that allows for fast, flexible, and adaptive decision-making over multiple options. In this paper we focus on two-option spiking in a two-dimensional, two-timescale model and we use ``agile'' to mean ``fast, flexible, and adaptive.'' To the best of our knowledge, this is the first such model to generalize spiking to more than one option, i.e., spiking that can occur in any of the multiple directions corresponding to the multiple options available to the excitable decision-maker. We call our model \textit{Spiking Nonlinear Opinion Dynamics} (S-NOD) as it derives from the introduction of an extra term in the Nonlinear Opinion Dynamics (NOD) model of~\cite{ref:Bizyaeva_NODTunable2023,ref:Leonard_ARpaper2024} that makes NOD excitable, i.e., spiking. 

NOD model the time evolution of opinions of a group of agents engaged in a collective decision-making process over a set of options. The derivation of NOD was tailored to model and study the principles of %adaptive
fast and flexible decision-making in biological collectives~\cite{VS-NEL:16c,gray2018multiagent}
%\cite{couzin2011uninformed,leonard2012decision,VS-NEL:16c,gray2018multiagent}  
and to use these principles to design %adaptive 
fast and flexible decision-making in built collectives~\cite{ref:Bizyaeva_NODTunable2023,ref:franci2021analysis,bizyaeva2023multi}. %From a systems perspective, 
NOD exhibits a mixed-feedback~\cite{sepulchre2019control, ref:Sepulchre_ExcitableBehaviorsChapter2018} structure: %. In NOD, 
opinion formation arises from the balance of a negative feedback loop that regulates agent opinions to a neutral solution and %many 
positive feedback loops (at the single-agent and network levels) that destabilize the neutral solution and trigger nonlinear opinion formation.
Decision-making driven by NOD  \cite{ref:Leonard_ARpaper2024} is fast %and flexible . It is fast 
because it %can 
%exhibits fast divergence 
can diverge quickly from indecision even in the absence of informative inputs about the options. %inputs carrying information about the options. 
It is flexible because %in the presence of informative inputs 
the sensitivity of opinion formation to informative inputs % the information the inputs carry 
is tunable. %and \tochange{adaptive to context}. 
Both speed and flexibility %of decision-making driven by NOD 
are determined by %the existence of 
a tunable threshold for opinion formation where %. At this threshold, 
negative and positive feedback are perfectly balanced and the dynamics become singular.

\begin{figure}[t]
	\centering
	\subfloat[]{\includegraphics[width=0.5\columnwidth]{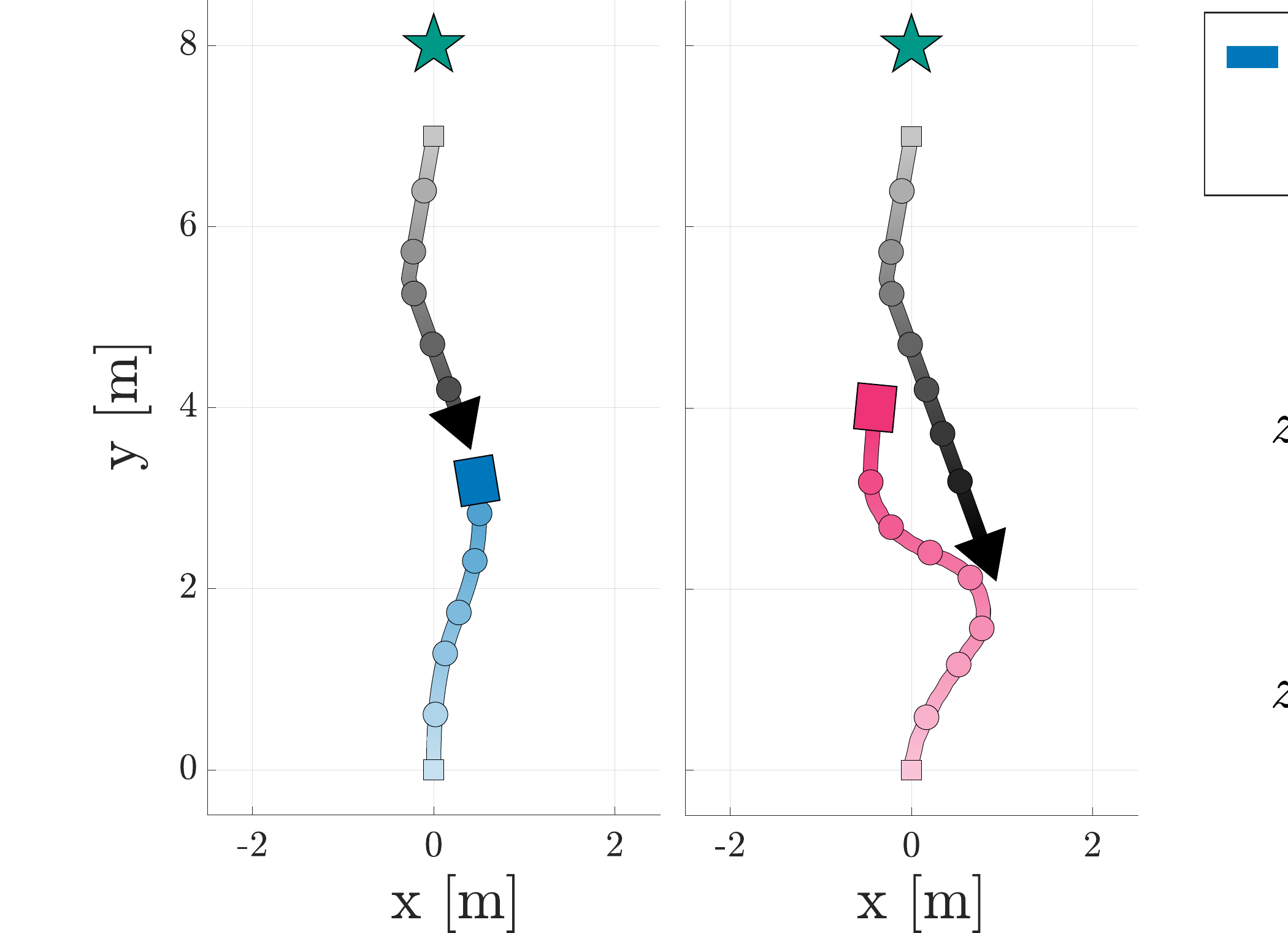}}\label{subfig:intro_robot_traj}\hspace{-0.13cm}
  \subfloat[]{\includegraphics[width=0.5\columnwidth]{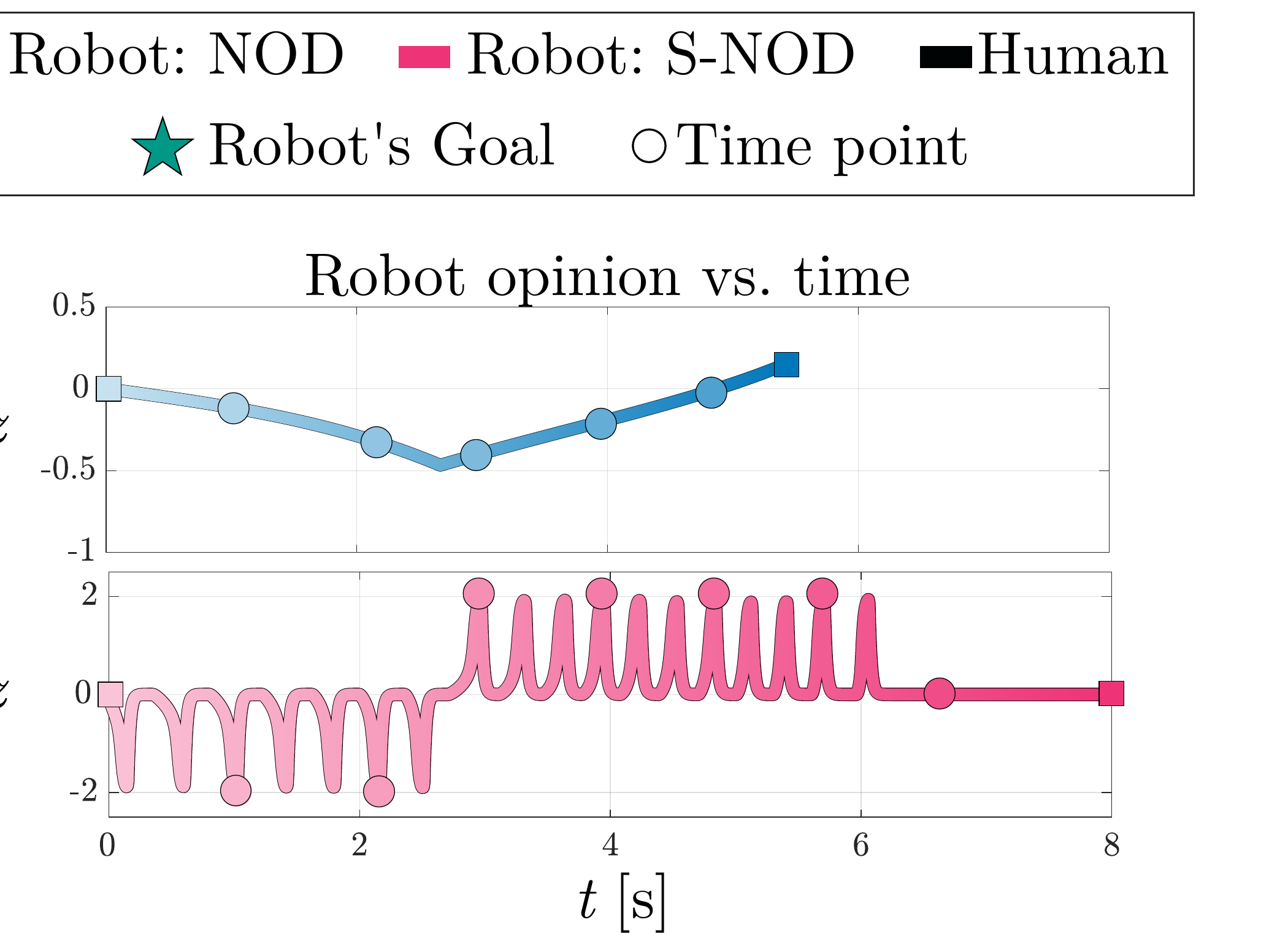}\label{subfig:intro_opinion_vs_time}}
    
\caption{%A social robot and its opinion over time while navigating a space with a human mover. 
(a): Trajectory of a robot controlled with NOD (S-NOD) is shown with a blue (pink) line as it navigates towards a goal (star) in the presence of an oncoming human mover (black). The NOD robot experiences a collision; the S-NOD robot does not. (b): Opinion $z$ of the robot over time $t$. Circles denote matching time points along  trajectories and opinions. \\\textcolor{NavyBlue}{
Figures \ref{fig:intro}, \ref{fig:phase plane}, \ref{fig:multagent}, and \ref{fig:robot_trajectories} animated 
%in video 
at \scriptsize{\url{https://spikingNOD.github.io}.}}}
\label{fig:intro}
\end{figure}

To derive S-NOD, we introduce in NOD %is 
a {\it slow} regulation term inspired by the dynamics of excitable (spiking) signal processing systems. %, widespread in biology. %~\cite{suel2006excitable,hille2001ion,fromm2007electrical}. %Here, we introduce a single extra term in the NOD equations that enables 
The resulting excitable dynamics give S-NOD its superior agility in decision-making. Where NOD allows for a fast decision, S-NOD allows for autonomous fast sequential decision-making, not requiring any ad-hoc reset of the model state once a decision is made. Where NOD is flexible, %i.e., tunably sensitive to input,}  
 S-NOD is flexible and capable of fast ``changes of mind" and adaptive responses when the %context or 
 information about the options changes rapidly and unexpectedly. %\edit{Thus, S-NOD provides agile decision-making.} 
 Further, S-NOD provides on-demand (event-based) opinion formation in the sense that large opinions are formed sparsely in time in the form of ``decision events'' and only when context requires it. 
 %\edit{S-NOD generates event-driven opinion formation, developing large opinions only when necessary (sparsely in time) in the form of discrete "decision events" occurring.}
 This makes S-NOD efficient. %, as} low energy and low resource are needed for greedy decision-making.}
The agility of S-NOD is illustrated in Fig.~\ref{fig:intro} in the context of a robot navigating around a human mover as studied in~\cite{ref:cathcart2023proactive}. 
Our major contributions in this paper are the presentation, analysis, and illustration of 
a first-of-its-kind model of two-dimensional excitable (spiking) dynamics for decision-making over two options, which provides superior agility (fast, flexible and, adaptive behavior), especially important in changing contexts. Also, in Section~\ref{sec:Fast&Flexible}, we present a new analysis of the singularity in NOD for a single decision-making agent and two %mutually exclusive 
options. We prove how a single feedback gain $K_u$ tunes opinion formation. We show  %a limit on $K_u$ with respect to flexible performance: 
that when $K_u$ gets too large, an opinion can become so robust that it will not change quickly enough if a new input arrives in favor of the alternative option. %(blue in Fig.~\ref{fig:intro}). %Second, in 
In Section~\ref{sec: ENOD}, we present S-NOD for a single agent and two options. We show the existence of the spiking limit cycles associated with  excitable behavior. We show geometrically the agility in behavior. %(pink in Fig.~\ref{fig:intro}). %\edit{We explain how \edit{S-NOD} generalizes easily to more than two options.} 
We generalize S-NOD to a network of multiple agents  and apply it to a social robot navigation problem in Section~\ref{sec: robot nav}. 

\section{Fast and Flexible Decision-Making: NOD}\label{sec:Fast&Flexible}

We recall NOD~\cite{ref:Bizyaeva_NODTunable2023,ref:Leonard_ARpaper2024} in Section~\ref{subsec:NOD definition}    for a single agent
%We begin in Section \ref{subsec:NOD definition} with a continuous time model of decision-making dynamics for a single agent 
evolving continuously over time its real-valued opinion about two mutually exclusive options %in the possible presence of input.  
with possible input present. %The model is adapted from the nonlinear opinion dynamics (NOD) introduced in   \cite{ref:Bizyaeva_NODTunable2023,ref:Leonard_ARpaper2024}; these  
%provide fast and flexible decision-making through tunable sensitivity to input. 
 %The decision-making is fast, quickly diverging from non-neutral opinion state with appropriate input, and flexible, responsive to small input changes with large decision switches when applicable. 
% This model is presented in Section \ref{subsec:FF definition}. 
 In Section~\ref{subsec:NOD analysis} we analyze stability of the neutral opinion solution and prove conditions on feedback gain $K_u$ that determine the type of singularity (type of pitchfork bifurcation) in the dynamics. We show that, by shaping bifurcation branches, $K_u$ tunes opinion formation. % and % in the dynamics shape bifurcation branches and 
 %\edit{fast and flexible}  decision-making. 
 In Section \ref{subsec:FF tunability}, we show limits on tunability of NOD that sacrifice agility in decision-making, motivating %which motivates introducing 
 S-NOD in Section~\ref{sec: ENOD}. %These dynamics augment the NOD and provide even greater agility for control.

\subsection{NOD for Single Decision-Maker and Two Options}\label{subsec:NOD definition}

We let an agent  represent a single decision-maker.
Let $z(t)\in \mathbb{R}$ define the agent's opinion at time $t$ about two mutually exclusive options. %An agent's opinion $z$ characterizes and quantifies its preference for either option 1 or 2.
The more positive (negative) is $z$, the more the agent favors (disfavors) option 1 and disfavors (favors) option 2. When $z\!=\!0$, the agent is neutral about the two options, i.e., in a state of indecision. 
 %$z>0$ (resp. $z<0$), the agent is defined to have a preference for option 1 (resp. option 2). A zero value of opinion $z = 0$ represents that an agent's opinion is \textit{neutral} and not partial to either option. As the magnitude of the agent's opinion $|z|$ increases, the agent is defined as being more opinionated towards its option preference. The opinion of the agent evolves in continuous time $t$ in response to feedback of its own opinion $z$, attention $u$, and any external input in the system which can be captured by bias parameter $b \in \reals$. We denote the rate of change of a variable $p$ as $d p / dt = \dot{p}$.
Let $u(t) \!\geq\! 0$ define the attention of the agent at time $t$ to its observations; $u$ is implemented as a gain in the dynamics.  Let $b(t) \in \mathbb{R}$ define an input signal at time $t$ that represents external stimulus and/or internal bias. When $b(t)\!>\!0$ ($b(t)\!<\!0$), it provides information (evidence) in favor of option 1 (option 2). 

 Decision-making variables $z$, $u$ evolve in continuous time $t$ according to the following NOD, adapted from \cite{ref:Bizyaeva_NODTunable2023,ref:Leonard_ARpaper2024}:
 \vspace{-.1truein}
 \begin{subequations}\label{eq:z_u_og}
    \begin{align}
        \tau_z \dot{z} &= -d \, z + \tanh \big( u \left( a z\right) + b \big) \label{eq:zdot_og}\\
        u &= u_0 + K_u z^2  \label{eq:u_og}
    \end{align}
\end{subequations}
where ${\dot z}:=\! dz/dt$. %In \eqref{eq:zdot_og}, 
$\tau_z\!>\!0$ is a time constant, and damping coefficient 
$d\!>\!0$ weights the negative feedback on $z$ that regulates to the neutral solution $z\!=\!0$. The second term in \eqref{eq:zdot_og} provides a nonlinear positive feedback on $z$ with weight given by the product of $u$ and  amplification 
coefficient $a \!>\!0$, plus the effects of  $b$.  The saturation nonlinearity given by the $\tanh$ function enables fast-and-flexible decision-making through opinion-forming bifurcations~\cite{ref:Bizyaeva_NODTunable2023, ref:Leonard_ARpaper2024}. The positive feedback gain is state-dependent according to~\eqref{eq:u_og} and  grows with $z^2$. Hence, small deviations from the neutral solution ($z\!=\!0$) in response to small inputs leave attention low and do not trigger large, nonlinear opinion formation. Large enough deviations from the neutral solution in response to large enough inputs cause a sharp increase in attention and trigger large, nonlinear opinion formation. The resulting implicit threshold distinguishing small and large inputs is tuned by %parameters 
$u_0$ and $K_u$.

\subsection{Analysis of Single-Agent, Two-Option NOD}\label{subsec:NOD analysis}
%or NOD analysis}

We study the dynamics and stability of solutions of  system \eqref{eq:zdot_og}-\eqref{eq:u_og}
using bifurcation theory. A local bifurcation refers to a change in number and/or stability of equilibrium solutions of a nonlinear dynamical system as a (bifurcation) parameter  is changed. The state and parameter values at which this change occurs is the \textit{bifurcation point}. At a bifurcation point, one or more eigenvalues of the Jacobian of the model must have zero real part \cite{ref:golubitsky2012singularities1, ref:strogatz2000}, i.e., a bifurcation point is a singularity of the model vector field. 

%We are primarily interested in the \textit{pitchfork} bifurcation, wherein one central solution swaps its stability and an additional two symmetric solutions form around it, both with opposite stability from the central solution. A \textit{supercritical pitchfork} describes the bifurcation where one stable solution becomes unstable, giving rise to two stable solution branches. A \textit{subcritical pitchfork} describes the bifurcation where two unstable solution branches about one stable solution becomes only one unstable solution.

%We are %primarily 
%mainly interested 
Our main interest is in the 
\textit{pitchfork} bifurcations. There are two generic types of pitchforks. A \textit{supercritical pitchfork} bifurcation describes how one stable solution becomes unstable and two stable solutions emerge as the bifurcation parameter increases. A \textit{subcritical pitchfork} bifurcation describes how two unstable solutions disappear and one stable solution becomes unstable as the bifurcation parameter increases.

Our objective is to understand how thresholds of fast-and-flexible decision-making are controlled by the model parameters with the goal of designing feedback control laws for those parameters that can make decision thresholds adaptive to context. %Because we specialize to a single agent and two options, we can prove more about the dynamics than in the general cases studied previously  \cite{ref:Bizyaeva_NODTunable2023, ref:Leonard_ARpaper2024}. 
Substituting~\eqref{eq:u_og} into~\eqref{eq:zdot_og} yields
\begin{equation}\label{eq:NOD}
   \tau_z \dot{z} = -d \, z + \tanh \Big( (u_0 + K_u z^2)\cdot \left( a z\right) + b\Big).
\end{equation}
%\NL{Something to keep the reader interested, role of parameters, etc. To understand the influence of $b$ we examine the bifurcations in the case of $b=0$ and then apply unfolding theory ....}
%To understand the influence of the parameters on the bifurcation, 
%We approach the analysis of the NOD in~\eqref{eq:NOD} incrementally in terms of complexity in the equation. 
We first study~\eqref{eq:NOD} in the case  $b\!=\!0$, i.e., when there is no evidence to distinguish the options, and %the 
bifurcations are symmetric. %We first study the unbiased equation and observe symmetry in the resultant bifurcation diagrams of \eqref{eq:NOD} letting opinion $z$ be the state variable and the basal attention $u_0$ be the bifurcation parameter. 
%We can then
Then, we  introduce $b\!\neq \!0$  and use unfolding theory \cite{ref:golubitsky2012singularities1} to understand the effects of inputs. %describe the behavior of the bifurcation when a bias is present.

%With no bias, $b=0$, in \eqref{eq:NOD}, we show that $z=0$ is an equilibrium in Lemma \ref{lemma:neutral opinion equil} and can therefore describe the singularity of the equation at this solution by Proposition \ref{prop:u_0^*}. 

\begin{comment}
\begin{lemma}\label{lemma:neutral opinion equil}
\textit{(Existence of neutral opinion equilibrium)}:
Consider \eqref{eq:NOD} and let $b=0$. The solution $z=0$ is always an equilibrium of this unbiased equation.
\end{lemma}
\begin{proof}
    The equilibrium states of the equation are defined where $\dot{z}=0$. The equilibrium states of \eqref{eq:NOD} when $b=0$ solve $\dot{z} = 0 = -d z + \tanh \left( (u_0 + K_u z^2)\cdot \left( a z\right)\right)$. As $\tanh(0)=0$, $z=0$ is always a solution to the equation and therefore an equilibrium.
\end{proof}
\end{comment}

\begin{lemma}\label{prop:u_0^*}
%\textit{(Quintic pitchfork bifurcation)}:
\textit{(NOD Taylor expansion and singularity)}:
Consider \eqref{eq:NOD} and let $b\!=\!0$. Then the %neutral 
solution $z\!=\!0$ is always an equilibrium, and the Taylor expansion of \eqref{eq:NOD} about $z\!=\!0$ is
\begin{multline}\label{eq:z_dot_taylorExpans1}
\dot{z} = 
\frac{1}{\tau_z}
    \Bigg(
    \left(a \, u_0-d\right)z
    + a \left(K_u-\frac{a^2 u_0^3}{3}\right) z^3\\
    +a^3 u_0^2 \left(\frac{2 a^2  u_0^3}{15}-K_u \right) z^5
    \Bigg) + \mathcal{O}(z^7).
\end{multline}
 A singularity exists at $(u_0,z) \!=\! (u_0^*,0)$, with $u_0^*\!=\!\frac{d}{a}$. %The neutral solution $z\!=\!0$  is stable when $u_0\!<\!u_0^*$ and unstable when $u_0\!>\!u_0^*$.
 The %neutral 
 solution $z\!=\!0$  is stable (unstable) when $u_0\!<\!u_0^*$ ($u_0\!>\!u_0^*$).
 %The singularity at $(z, u_0) = (0, u_0^*)$ is a quintic pitchfork bifurcation for $K_u=\frac{d^3}{3a}$, a supercritical pitchfork for $K_u<\frac{d^3}{3a}$, and a subcritical pitchfork for $K_u>\frac{d^3}{3a}$.
 %Equation \eqref{eq:z_dot_taylorExpans1} is isomorphic to the normal form for  the quintic pitchfork bifurcation with bifurcation parameter $u_0$ and bifurcation point $(z, u_0) = (0, u_0^*)$. 
 \end{lemma}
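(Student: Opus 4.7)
The plan is to proceed in three short steps: verify that $z=0$ is an equilibrium, compute the Taylor expansion term-by-term by composing the cubic argument with the $\tanh$ series, and then read off the linear-stability criterion.

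First I would substitute $z=0$ directly into \eqref{eq:NOD} with $b=0$. The damping term vanishes and $\tanh(0)=0$, so $\dot z=0$ at $z=0$ regardless of parameter values, establishing that the neutral opinion is always an equilibrium. This also means we can legitimately Taylor-expand the right-hand side about this equilibrium.

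Next, to get the expansion, I would set $g(z):=(u_0+K_u z^2)(az)=au_0 z+aK_u z^3$ and use the odd Taylor series
\begin{equation*}
\tanh(x)=x-\tfrac{1}{3}x^3+\tfrac{2}{15}x^5+\mathcal{O}(x^7).
\end{equation*}
Substituting $g(z)$ for $x$ and tracking powers of $z$ only up to order $5$, I would compute $g(z)^3=a^3u_0^3 z^3+3a^3u_0^2 K_u z^5+\mathcal{O}(z^7)$ and $g(z)^5=a^5u_0^5 z^5+\mathcal{O}(z^7)$. Collecting the $z$, $z^3$, and $z^5$ contributions from $\tanh(g(z))$ and subtracting $dz$, then dividing by $\tau_z$, reproduces \eqref{eq:z_dot_taylorExpans1} exactly. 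Only parity and a careful bookkeeping of which terms land at each order are needed — the absence of even powers follows automatically since both $g$ and $\tanh$ are odd.

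For the singularity and stability claim, I would observe that the Jacobian of \eqref{eq:NOD} at $z=0$ is the scalar coefficient of $z$ in \eqref{eq:z_dot_taylorExpans1}, namely $(au_0-d)/\tau_z$. A singularity (zero eigenvalue of the linearization) therefore occurs precisely when $au_0-d=0$, i.e., at $u_0=u_0^*=d/a$. Since this eigenvalue is a smooth monotonic function of $u_0$ (with $a,\tau_z>0$), it is negative for $u_0<u_0^*$ and positive for $u_0>u_0^*$, yielding (by the Hartman–Grobman theorem or the direct scalar argument) asymptotic stability of $z=0$ below the singularity and instability above it.

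No step presents a real obstacle; the only place one can slip is in the $\mathcal{O}(z^5)$ coefficient, where contributions from $\tfrac{1}{3}g(z)^3$ and $\tfrac{2}{15}g(z)^5$ must be combined with the correct signs. I would double-check that coefficient by factoring $a^3u_0^2$ to match the form displayed in \eqref{eq:z_dot_taylorExpans1}.
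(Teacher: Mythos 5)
Your proposal is correct and follows essentially the same route as the paper: verify $\tanh(0)=0$ gives the equilibrium, compose the cubic argument $(u_0+K_uz^2)(az)$ with the odd $\tanh$ series to obtain \eqref{eq:z_dot_taylorExpans1} (your bookkeeping of the $z^3$ and $z^5$ coefficients checks out), and read stability and the singularity at $u_0^*=d/a$ off the linearization $J(0)=(au_0-d)/\tau_z$. No gaps.
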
%The parameter value $u_0$ with respect to the singularity determines the stability of the neutral opinion. 
%\end{proposition}
\begin{proof}
When $z\!=\!0$ and $b\!=\!0$, the right hand side of \eqref{eq:NOD} is zero thus $z\!=\!0$ is always an equilibrium.
We expand \eqref{eq:NOD} with  $b\!=\!0$ about $z\!=\!0$. The Taylor expansion of the hyperbolic tangent is $\tanh(w) \!=\! w - w^3/3 + 2 w^5/15 + \mathcal{O}(w^7)$. Substituting this into \eqref{eq:NOD} yields \eqref{eq:z_dot_taylorExpans1}. The Jacobian $J(z) \!=\! \frac{d \dot{z}}{d z}$ of \eqref{eq:z_dot_taylorExpans1} evaluated at $z\!=\!0$ is $J(0) \!=\! (a \, u_0 - d)/\tau_z$, which is singular when $u_0 \!=\! u_0^* \!=\! \frac{d}{a}$. When $u \!<\! u_0^*$, $J(0) \!<\! 0$ thus $z\!=\!0$ is exponentially stable. When $u \!>\! u_0^*$, $J(0) \!>\! 0$ thus $z\!=\!0$ is unstable.
%For small $z$, we ignore the $\mathcal{O}(z^7)$ term and observe that the dynamics with bifurcation point $(z, u_0) = %(0,\frac{d}{a})$ 
%(0, u_0^*)$ are isomorphic to the normal form for the quintic pitchfork bifurcation \cite{ref:strogatz2000}.
%Let the higher order terms $\mathcal{O}(w^7)$ be negligable. The result \eqref{eq:z_dot_taylorExpans1} then takes the normal form of the quintic pitchfork bifurcation, as presented in \cite{ref:strogatz2000}.
%The Jacobian, defined as $J(z) = \partial \dot{z} / \partial z$, is evaluated at the neutral equilibrium. fixed point is $J(0) = (a \, u_0 - d)/\tau_z$. The stability of the solution $z=0$ is determined by the sign of the Jacobian, where $J<0$ (resp. $J>0)$ characterizes the solution as stable (resp. unstable). The sign of this Jacobian and consequently the stability of the solution change when $J=0$ at the bifurcation point $u_0^* = \frac{d}{a}$. Thus, equation \eqref{eq:z_dot_taylorExpans1} has a singularity at $(z, u_0) = (0, u_0^*) $. 
\end{proof}
\raggedbottom
We next explore in Proposition~\ref{prop:Ku_pitchfork} and Fig.~\ref{fig:Kcomparison} the effect of parameter $K_u$ on the cubic and quintic terms of \eqref{eq:z_dot_taylorExpans1} and its role in determining the type of singularity at $(u,z)\!=\!(u_0^*,0)$.
%whether the pitchfork bifurcation is supercritical or subcritical. 
%their dependence on the parameter $K_u$ that acts as a gain on the quadratic influence of opinion on the system's attention.

\begin{proposition}\label{prop:Ku_pitchfork}
\textit{($K_u$ determines singularity type)}:
%Consider \eqref{eq:NOD}. 
Let $b\!=\!0$, $u_0^* \!=\! \frac{d}{a}$. The singularity of dynamics~\eqref{eq:NOD} at $(u,z)\!=\!(u_0^*,0)$  as proved in Lemma~\ref{prop:u_0^*} corresponds to a \textit{supercritical} pitchfork bifurcation for $K_u \!<\! \frac{d^3}{3a}$, a \textit{quintic} pitchfork bifurcation for $K_u \!=\! \frac{d^3}{3a}$, and a \textit{subcritical} pitchfork bifurcation for $K_u \!>\! \frac{d^3}{3a}$.
%As $K_u$ increases, so does the magnitude of the symmetric stable solutions of the supercritical pitchfork.
%For $K_u > \frac{d^3}{3a}$, the dynamics \eqref{eq:NOD} undergo a \textit{subcritical} pitchfork bifurcation at $(z, u_0) = (0, u_0^*)$.
\end{proposition}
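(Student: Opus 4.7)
The plan is to leverage the Taylor expansion \eqref{eq:z_dot_taylorExpans1} already established in Lemma~\ref{prop:u_0^*} and apply the standard normal-form classification of pitchfork bifurcations by inspecting the sign of the leading odd-order coefficient at the bifurcation point $(u_0,z)=(u_0^*,0)$. Since $b=0$ and the dynamics are $\mathbb{Z}_2$-symmetric under $z \mapsto -z$, only odd powers of $z$ appear, so the relevant quantities are the cubic and, in the degenerate case, the quintic coefficients of \eqref{eq:z_dot_taylorExpans1}.

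First, I would substitute $u_0 = u_0^* = d/a$ into the cubic coefficient of \eqref{eq:z_dot_taylorExpans1}, which gives
\begin{equation*}
a\!\left(K_u - \tfrac{a^2 (u_0^*)^3}{3}\right) = a\!\left(K_u - \tfrac{d^3}{3a}\right).
\end{equation*}
By the standard normal-form classification (see, e.g., \cite{ref:strogatz2000, ref:golubitsky2012singularities1}), with the linear term changing sign as $u_0$ crosses $u_0^*$ (from Lemma~\ref{prop:u_0^*}), a strictly negative cubic coefficient yields a supercritical pitchfork and a strictly positive cubic coefficient yields a subcritical pitchfork. This immediately gives the first and third cases: supercritical for $K_u < d^3/(3a)$ and subcritical for $K_u > d^3/(3a)$.

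Second, I would treat the degenerate case $K_u = d^3/(3a)$, in which the cubic coefficient vanishes. Substituting both $u_0 = d/a$ and $K_u = d^3/(3a)$ into the quintic coefficient in \eqref{eq:z_dot_taylorExpans1} yields
\begin{equation*}
a^3 (u_0^*)^2\!\left(\tfrac{2 a^2 (u_0^*)^3}{15} - K_u\right) = a d^2\!\left(\tfrac{2 d^3}{15 a} - \tfrac{d^3}{3a}\right) = -\tfrac{d^5}{5},
\end{equation*}
which is strictly negative. Hence, after rescaling time by $\tau_z$, the local normal form is $\dot z = \mu z - c z^5 + \mathcal{O}(z^7)$ with $\mu = (a u_0 - d)/\tau_z$ and $c>0$, i.e., a quintic (degenerate) supercritical pitchfork: the nontrivial equilibria scale as $z \sim \pm \mu^{1/4}$ rather than $\pm \mu^{1/2}$.

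The main obstacle is really a matter of rigor rather than computation: one must verify that the higher-order $\mathcal{O}(z^7)$ remainder in \eqref{eq:z_dot_taylorExpans1} does not alter the bifurcation topology. This follows from standard center-manifold/normal-form arguments since the one-dimensional reduced dynamics have a nondegenerate leading nonlinearity (cubic in cases 1 and 3, quintic in case 2) and the parameter $u_0$ unfolds the singularity transversally, as $\partial_{u_0}\partial_z \dot z \big|_{(u_0^*,0)} = a/\tau_z \neq 0$. I would simply cite \cite{ref:golubitsky2012singularities1} for this transversality/recognition step and conclude.
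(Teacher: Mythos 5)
Your proposal is correct and follows essentially the same route as the paper: evaluate the cubic coefficient of the Taylor expansion \eqref{eq:z_dot_taylorExpans1} at $u_0=u_0^*$ to get supercritical ($K_u<\tfrac{d^3}{3a}$) versus subcritical ($K_u>\tfrac{d^3}{3a}$), and in the degenerate case check that the quintic coefficient is strictly negative and invoke the recognition problem and $\mathbb{Z}_2$-symmetric universal unfolding from \cite{ref:golubitsky2012singularities1}. Your explicit transversality check $\partial_{u_0}\partial_z \dot z\big|_{(u_0^*,0)}=a/\tau_z\neq 0$ and the remark on the $\mathcal{O}(z^7)$ remainder merely make explicit what the paper leaves to the cited recognition results.
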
 

\begin{proof}
Denote $p\!=\! (K_u \!-\! \frac{d^3}{3a})$  and $q\!=\! (\frac{2d^3}{15a} \!-\! K_u)$ as the coefficients of $a z^3/\tau_z$  and $(a^3 u_0^2)z^5/\tau_z$  in \eqref{eq:z_dot_taylorExpans1} resp. at $u_0 \!=\! u_0^* \!=\! \frac{d}{a}$. When $K_u\!<\!\frac{d^3}{3a}$ ($K_u\!>\!\frac{d^3}{3a}$), then $p\!<\!0$ ($p\!>\!0$) and  (\ref{eq:z_dot_taylorExpans1}) is the normal form of the supercritical (subcritical) pitchfork bifurcation \cite{ref:golubitsky2012singularities1}. When $K_u\!=\!\frac{d^3}{3a}$, then $p\!=\!0$, $q\!<\!0$ and (\ref{eq:z_dot_taylorExpans1}) is the normal form of the quintic pitchfork, by %following from the 
recognition problem~\cite[Prop.~VI.2.14]{ref:golubitsky2012singularities1} and %the form of 
its $Z_2$-symmetric universal unfolding~\cite[Prop.~VI.3.4; Fig.~VI.3.3]{ref:golubitsky2012singularities1}.
\end{proof}

\begin{figure}
    \centering
    %\subfloat[\label{subfig:K_low}]
    %{\includegraphics[width=.33\linewidth]{images/K_figure/fig2a_2.pdf}}
    \subfloat[\label{subfig:K_bifurcations}]
    %{\includegraphics[width=.57\linewidth]{}}
    {\includegraphics[width=.5456\columnwidth]{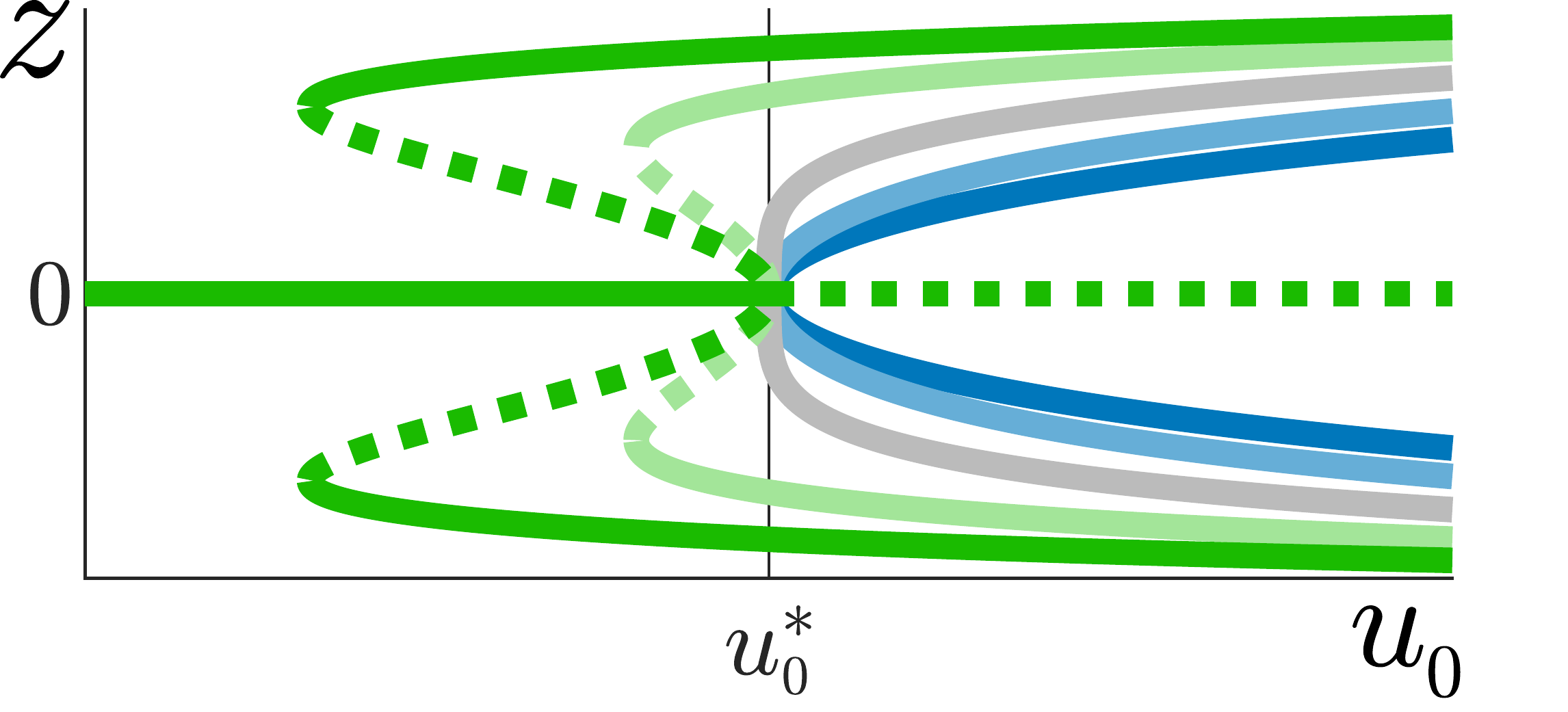}}
    %\hspace{2mm}
    \subfloat[\label{subfig:K_coefficientplots}]
    {\includegraphics[width=.4543\columnwidth]{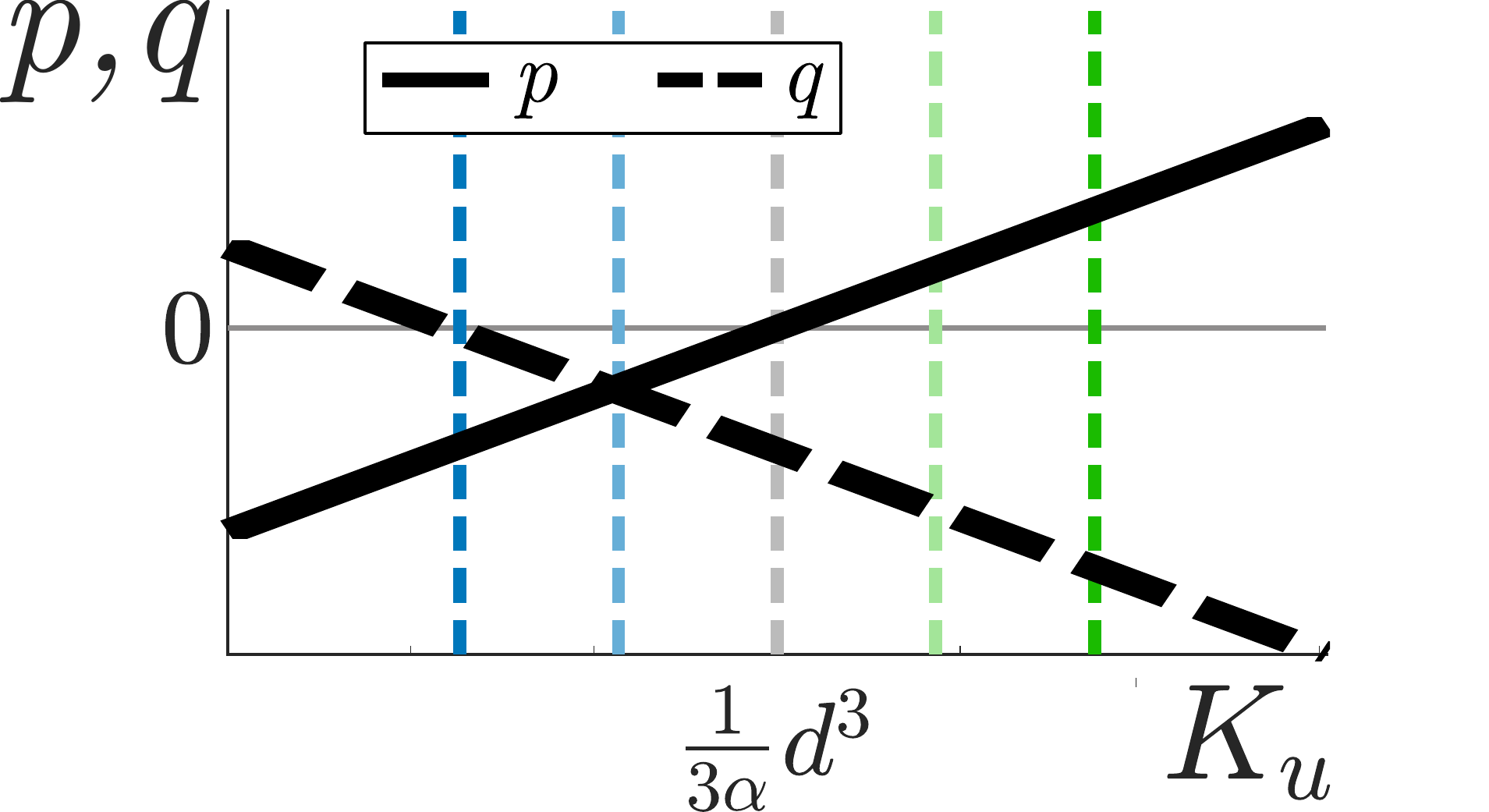}}
    \caption{The effect of $K_u$ on the bifurcation diagram of \eqref{eq:NOD} and the cubic and quintic terms of \eqref{eq:z_dot_taylorExpans1}. 
    (a): Bifurcation diagrams of %equilibrium solutions $z$ to 
    NOD \eqref{eq:NOD} % as a function of bifurcation parameter $u_0$
    with $K_u$ values corresponding to the vertical dashed lines in (b). Stable (unstable) solutions are shown with solid (dotted) lines. The bifurcation point is $(u_0^*, 0)$. %At $(u_0^*, 0)$, 
    %Region (a)-(b) have supercritical pitchfork bifurcations and (c) has a subcritical pitchfork bifurcation.
    (b): Coefficient $p$ ($q$) as a function of $K_u$ shown as a solid (dashed) black line. %A gray dotted line denotes %$K_u$ value at which 
    %where the sign of the cubic coefficient changes. 
    }
    \label{fig:Kcomparison}
\end{figure}
%\CC{begin changes here - ''blue lines are supercritical, the gray is the quintic pitchfork and the green are subcritical"}
Proposition~\ref{prop:Ku_pitchfork} uncovers the key role of $K_u$ in tuning opinion formation (Fig.~\ref{fig:Kcomparison}): (i) $K_u$ controls the supercritical vs. subcritical nature of opinion formation, and (ii) increasing $K_u$ increases opinion strength of non-neutral solutions. %Fig.~\ref{fig:Kcomparison} illustrates the results of Proposition~\ref{prop:Ku_pitchfork}, namely how the singularity type changes as $K_u$ increases, using plots of the pitchfork bifurcation diagram corresponding to values of $K_u$ at which $p$ changes sign. %Each bifurcation diagram on the bottom of Fig.~\ref{fig:Kcomparison} is a plot of the equilibrium solutions of~\eqref{eq:NOD}, i.e., the solutions of $\dot z = 0$, as a function of $u_0$.
Bifurcation diagrams in Fig.~\ref{subfig:K_bifurcations} %are plots of 
plot equilibrium solutions of~\eqref{eq:NOD}, i.e., solutions of $\dot z \!=\! 0$, as a function of bifurcation parameter $u_0$ for %five 
different values of $K_u$ (Fig.~\ref{subfig:K_coefficientplots}). %distinguished by color.
%These values of $K_u$ are indicated in Fig.~\ref{subfig:K_coefficientplots}. 
%The solutions that are stable are drawn in solid blue and the solutions that are unstable are drawn in dotted blue. 
The singularity %in the diagram 
at $(u_0,z)\!=\!(u_0^*,0)$ is a pitchfork bifurcation: blue, gray,  green lines show supercritical, quintic, subcritical solutions, respectively. %a gray line shows quintic solutions, and green lines show subcritical solutions.  
%Note how the pitchfork bifurcation changes as $K_u$ increases from $0$, dependent on the sign of $p$.
%Consider the pitchfork bifurcation as $K_u$ increases from $0$: 
For all $K_u$: when $u_0\!<\!u_0^*$, the neutral solution is stable; when $u_0\!\!>\!\!u_0^*$, the neutral solution is unstable and there is a bistable symmetric pair of solutions. 
When $K_u\!\!<\!\!\frac{d^3}{3a}$, 
the pitchfork is supercritical: there are no other solutions. 
When $K_u\!\!>\!\!\frac{d^3}{3a}$, 
the pitchfork is subcritical:
two stable non-neutral solutions appear for $u_0 \!<\! u_0^*$, through %so-called 
saddle-node bifurcations. As $K_u$ grows more positive, these solutions emerge for smaller values of $u_0$ and increase in magnitude, reflecting increasing opinion strength.
\begin{figure}[t!]
\vspace*{7pt}
    \centering
    {\includegraphics[width=0.98\columnwidth]{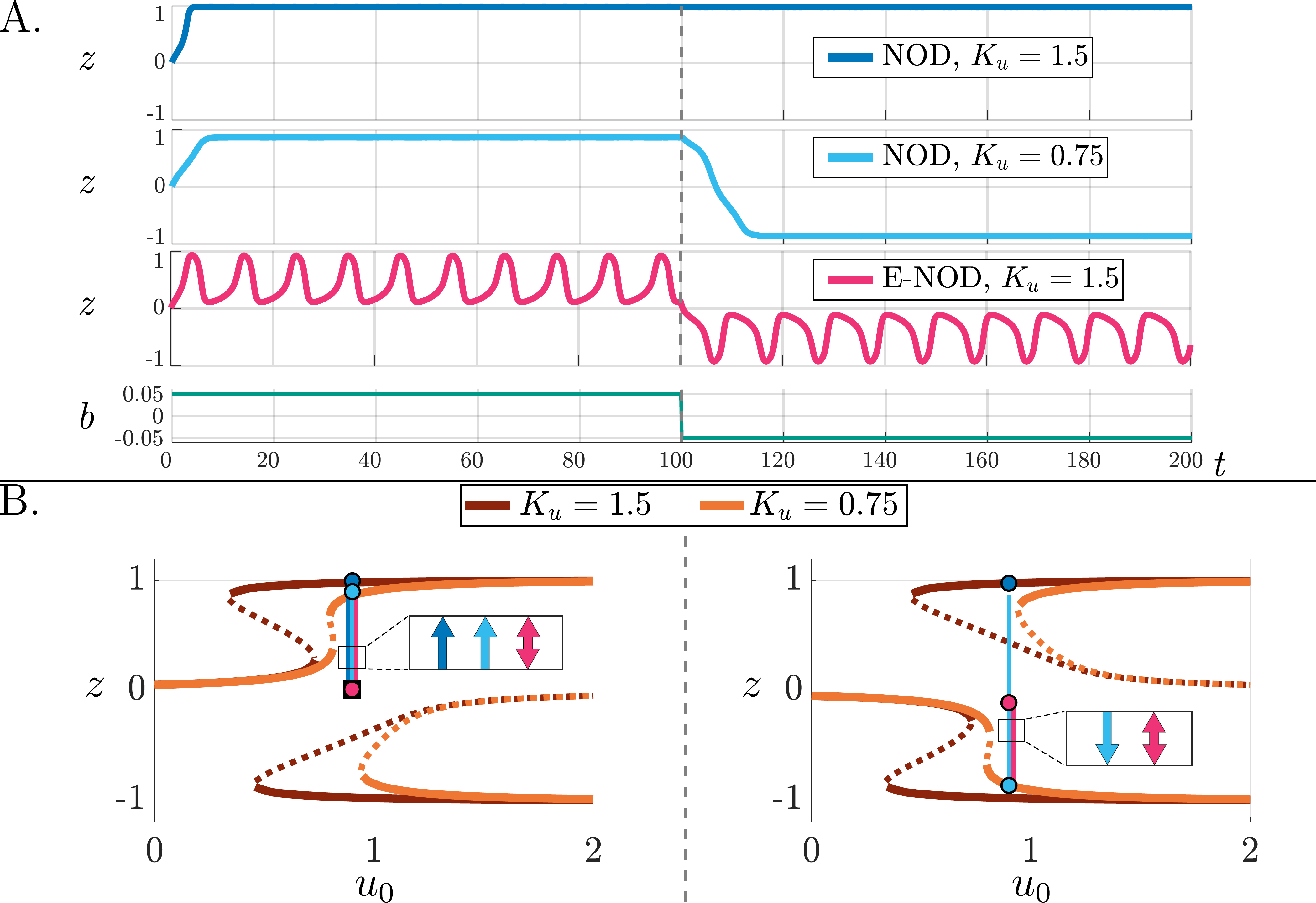}}%\vspace{-0.13cm}
    \caption{Opinion solutions of NOD and S-NOD over time and associated bifurcation diagrams. 
    (A): Trajectories of NOD \eqref{eq:NOD} and S-NOD \eqref{eq:ENOD} for initial condition $(z, u_0)|_{t=0} = (0.01,0.9)$ for two $K_u$ values. Input signal $b$ is also shown over time.
    (B): Bifurcation diagrams of \eqref{eq:NOD} for the two values of $K_u$, showing the solutions in A. moving from initial state to steady state.}\label{fig:exNOD_vs_NOD} 
    \end{figure} 

\subsection{Limitation on Tuning of NOD}\label{subsec:FF tunability}
%\NL{Introduce a result on $K_u$ and the trend in size of the region with multi-stable solutions. This can be a proposition.}  
We prove that the region of multi-stability in the subcritical bifurcation of NOD grows as $K_u$ gets large. %, the bifurcation diagram  grows. This is because 
%The saddle-node bifurcation points of the quintic pitchfork move to smaller values of $u_0$ while the subcritical pitchfork at the origin does not depend on $K_u$ (see Fig.~\ref{fig:Kcomparison}).  

\begin{proposition}
\label{prop:monotonic}
    \textit{($K_u$ determines region of multi-stability)}: 
    Let $b\!=\!0$ and $(u_0^\dag, z^\dag)$ be either of the two saddle-node bifurcation points of the subcritical pitchfork of NOD \eqref{eq:NOD} for $K_u \!>\! \frac{d^3}{3a}$. Then  $u_0^\dag$ is a monotonically decreasing function of $K_u$, i.e., $\frac{\partial u_0^\dag}{\partial K_u}\!<\!0$.
\end{proposition}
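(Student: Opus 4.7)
The plan is to parameterize the equilibrium curve of NOD explicitly in the $(u_0,z)$-plane and then apply an envelope-theorem style identity. Setting $\dot z = 0$ and $b=0$ in \eqref{eq:NOD}, then inverting $\tanh$, yields (for $|dz| < 1$)
\begin{equation*}
u_0 \;=\; g(z, K_u) \;:=\; \phi(z) - K_u z^2, \qquad \phi(z) \;:=\; \frac{\tanh^{-1}(dz)}{az},
\end{equation*}
where $\phi$ extends smoothly (and evenly) through $z=0$ with $\phi(0)=d/a=u_0^*$. Thus, away from the trivial branch, the equilibrium locus of \eqref{eq:NOD} in the $(u_0,z)$-plane is precisely the graph of $g(\cdot,K_u)$, and the two symmetric non-trivial branches of the subcritical pitchfork (which exist for $K_u > d^3/(3a)$ by Proposition~\ref{prop:Ku_pitchfork}) are captured exactly by this formula.

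Next, I would identify the two saddle-node points $(u_0^\dag,\pm z^\dag)$ as the fold points of this graph, characterized by $\partial_z g(z^\dag, K_u) = \phi'(z^\dag) - 2K_u z^\dag = 0$ with $z^\dag \neq 0$ (non-triviality of the fold follows from the same Taylor analysis used in Lemma~\ref{prop:u_0^*}, where the pitchfork at $z=0$ is already accounted for and the non-trivial roots of $\phi'(z) - 2K_u z$ appear exactly when $K_u > d^3/(3a)$). Differentiating the identity $u_0^\dag(K_u) = g(z^\dag(K_u), K_u)$ with respect to $K_u$ and invoking the fold condition to annihilate the $dz^\dag/dK_u$ term gives
\begin{equation*}
\frac{\partial u_0^\dag}{\partial K_u} \;=\; \partial_z g(z^\dag,K_u)\,\frac{dz^\dag}{dK_u} + \partial_{K_u} g(z^\dag,K_u) \;=\; 0 - (z^\dag)^2 \;=\; -(z^\dag)^2 \;<\; 0,
\end{equation*}
which is the desired inequality.

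The only subtle point, and essentially the only non-trivial step, is justifying the differentiability of $z^\dag(K_u)$ needed to apply the chain rule above; a priori this requires a non-degenerate-fold hypothesis via the implicit function theorem. I would sidestep this by instead parameterizing the saddle-node locus by $z$: the fold condition gives $K_u(z) = \phi'(z)/(2z)$ and $u_0^\dag(z) = \phi(z) - z\phi'(z)/2$, both smooth in $z$ for $z \neq 0$. The direct ratio $du_0^\dag/dK_u = (du_0^\dag/dz)/(dK_u/dz)$ simplifies to $-z^2$ after cancellation of the common factor $z\phi''(z) - \phi'(z)$, reproving the identity unconditionally and completing the argument.
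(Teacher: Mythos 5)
Your proposal is correct and takes essentially the same route as the paper: the paper applies the implicit function theorem to $f(z,K_u,u_0):=-dz+\tanh\big(az(u_0+K_u z^2)\big)=0$ to obtain $u_0=g(z,K_u)$ and computes $\partial g/\partial K_u=-(z^\dag)^2$, which is exactly the quantity you reach by explicitly inverting the $\tanh$. The only difference is that you spell out the envelope step (the fold condition $\partial_z g(z^\dag,K_u)=0$ annihilating the $dz^\dag/dK_u$ contribution) and the differentiability of the fold locus via the $z$-parameterization, points the paper leaves implicit; this is a tightening of the same argument rather than a different one.
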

\begin{proof}
    Let $K_u^\dag\!>\!\frac{d^3}{3a}$  and $f(z,K_u,u_0)\!:=\!-d\,z\!+\!\tanh(a z(u_0\!+\!K_u z^2))$. By hypothesis,   $f(z^\dag,K_u^\dag,u_0^\dag)\!=\!0$. We have $\frac{\partial f}{\partial u_0}(z^\dag,K_u^\dag,u_0^\dag)=a z^\dag\tanh'(a z^\dag(u_0^\dag\!+\!K_u^\dag (z^\dag)^2))\!\neq\! 0$, since $z^\dag \!\neq\! 0$. Following  \cite{ref:amorim2024threshold}, we use the implicit function theorem to show the existence of $g\!:\!\mathbb{R}^2\!\to\!\mathbb{R}$ such that for some neighborhood of $(z^\dag,K_u^\dag)$, %we have 
    $f(z,K_u,g(z,K_u))\!=\!0$. %and %we 
    %get an expression for the 
    We get
    \begin{align*}
        \frac{\partial g}{\partial K_u} &= -\left ( \frac{\partial f}{\partial u_0} \right )^{-1} \left ( \frac{\partial f}{\partial K_u} \right ) \\
        &= -\frac{a (z^\dag)^3 \tanh'(a z^\dag(u_0^\dag+K_u^\dag (z^\dag)^2))}{a z^\dag \tanh'(a z^\dag(u_0^\dag+K_u^\dag (z^\dag)^2))} = -(z^\dag)^2.
    \end{align*}
    Since $z^\dag\!\neq\! 0$, %this implies 
    $u_0^\dag$ is monotonically decreasing in $K_u$. 
\end{proof}
\begin{figure*}[ht!]
    \centering
    \includegraphics[align=c,width=0.06696\linewidth]{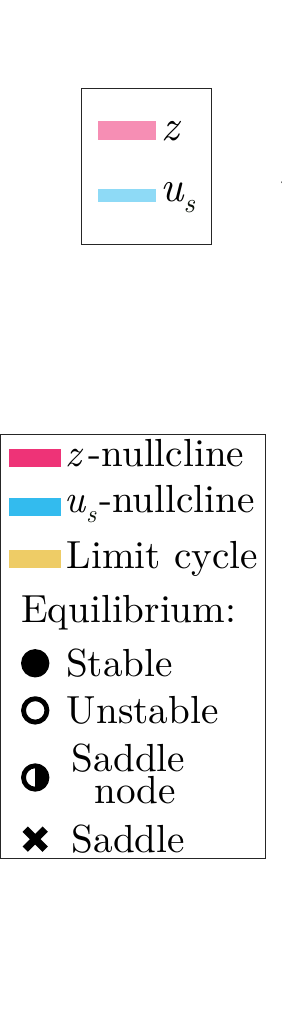}
    \hfill
    \subfloat[$u_0 = 0.1, b = 0$\label{subfig:u0_low}]
    {\includegraphics[align=c,width=.23\linewidth]{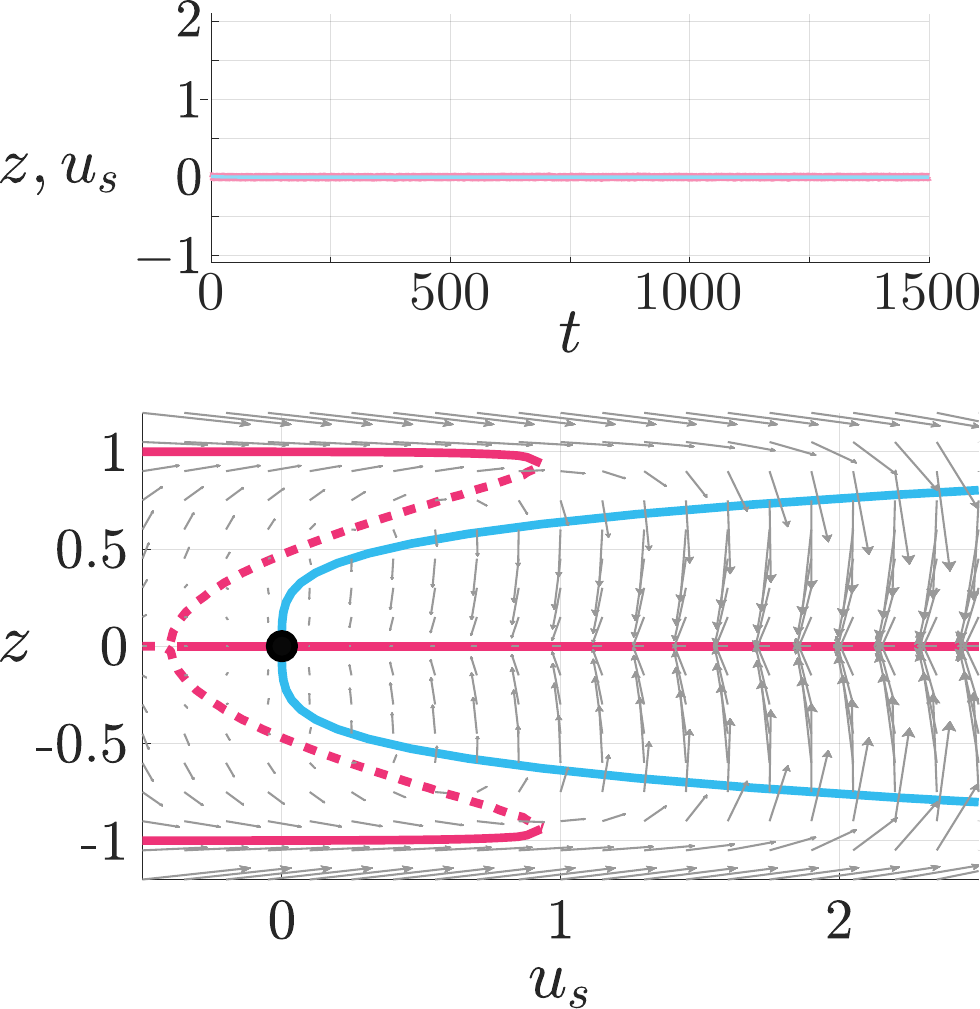}}
    \hfill
    \subfloat[$u_0 = 0.5, b = 0$\label{subfig:u0_med}]
    {\includegraphics[align=c,width=.23\linewidth]{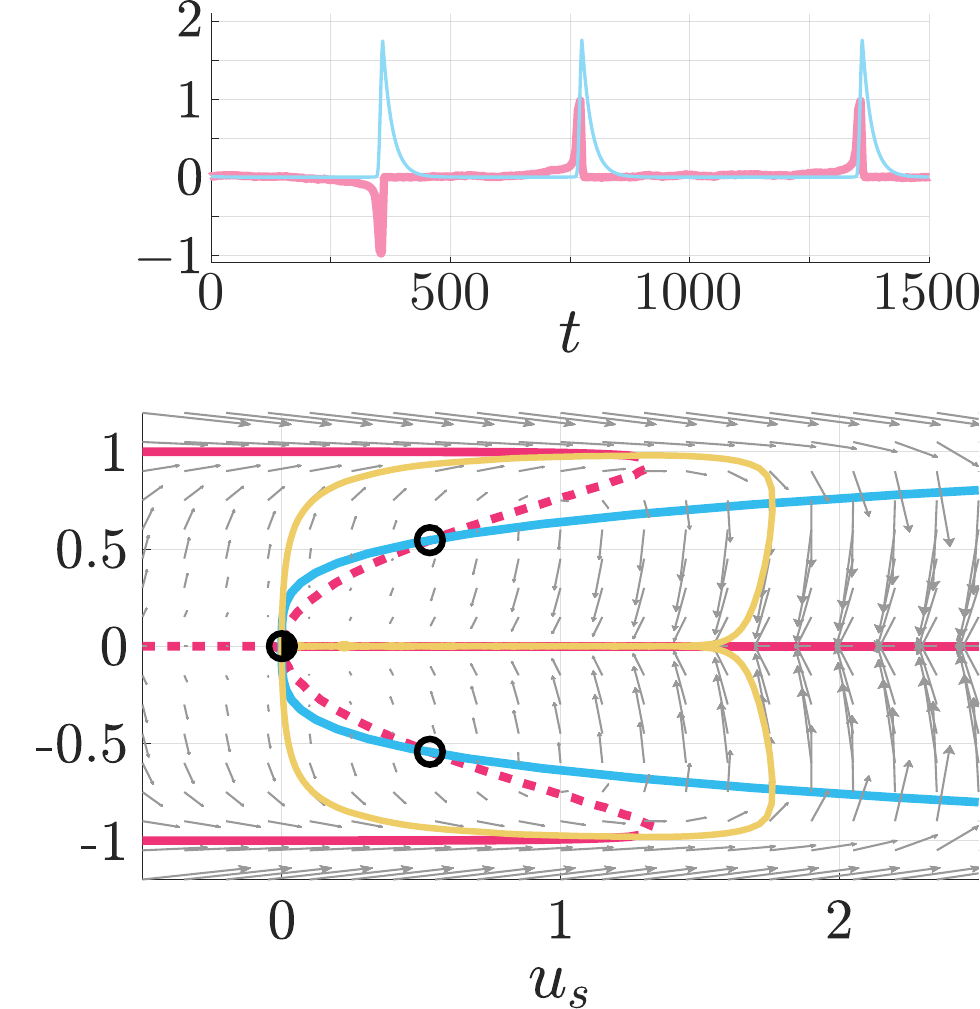}}
    \hfill
    \subfloat[$u_0 = 0.7, b = 0$\label{subfig:u0_high}]
    {\includegraphics[align=c,width=.23\linewidth]{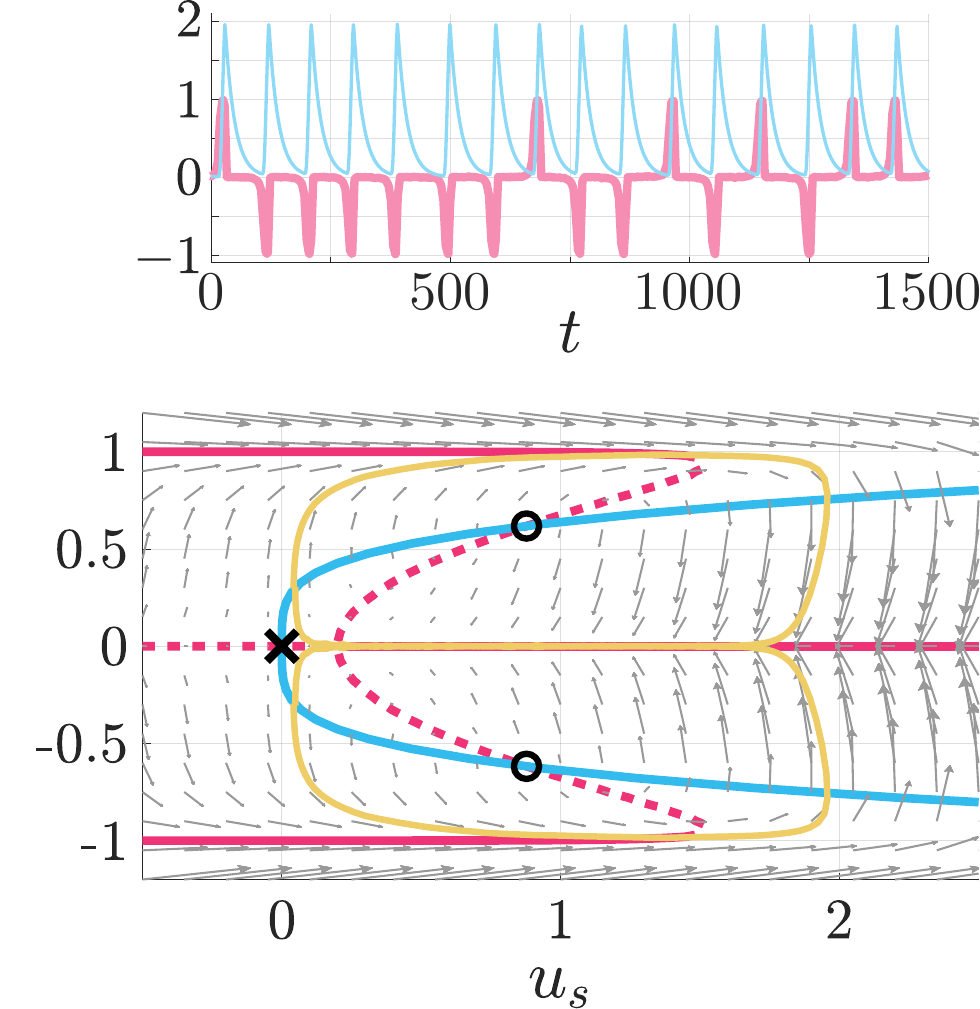}}
    \hfill
    \subfloat[$u_0 = 0.7, b = 0.1$\label{subfig:u0_high_b}]
    {\includegraphics[align=c,width=.23\linewidth]{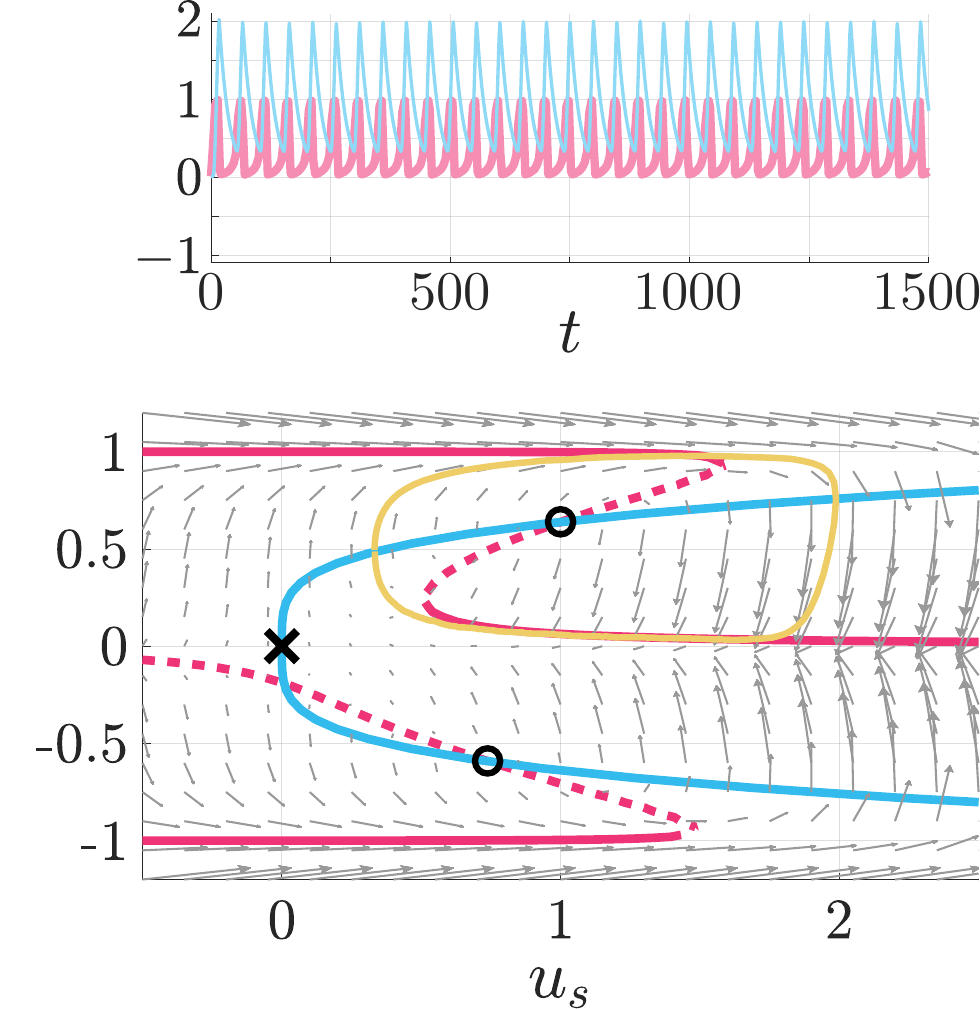}}
    \caption{The system solutions and $(u_s,z)$ %associated 
    phase portrait
    %of the  adaptation $u_s$- and opinion $z$-nullclines
    as the basal attention $u_0$ increases. For all, $d \!= \!1, \alpha \!=\! 2,$ (thus $u_0^*\!=\! 0.5$)$, K_u \!=\! 2, K_{u_s} \!=\! 6, \tau_{u_s}/\tau_z \!=\! 10$.
    (Top): Example  solutions of $u_s$ and $z$ over time, with initial condition as $(u_s, z)|_{t=0} = (0.01, 0.01)$ and additive Gaussian distributed white noise. 
    (Bottom): The $u_s$-nullcline ($z$-nullcline) is shown as a blue (pink) line. Solid (dotted) lines indicate stable (unstable) branches of the $z$-nullcline with respect to% the fast dynamics
    ~\eqref{eq:z_dot_ENOD}. Gray arrows denote the vector field. Black filled circles show stable equilibria, unfilled are unstable equilibria, partially filled is a saddle-node bifurcations. Crosses show saddle equilibria. Saddle-node-homoclinic cycles in (b), and limit cycles in (c) and (d) are in yellow. \squeezeup} 
    \label{fig:phase plane}
\end{figure*}
Proposition~\ref{prop:monotonic} implies that one limitation of %the dynamics 
NOD is that large $K_u$ can make the region of multi-stability %in the subcritical pitchfork 
of NOD \eqref{eq:NOD} so large and robust that solutions can get ``stuck'' in one of the decision attractors unless very large inputs in favor of another decision state are applied. This is illustrated in Fig.~\ref{fig:exNOD_vs_NOD}A, where the first (dark blue) and second (light blue) NOD differ only in their $K_u$ parameters ($K_{u_1}\!>\!K_{u_2}\!>\!\frac{d^3}{3a}$) but their solutions are distinctively different. %Initially %positive input, 
At the stimulus onset ($b\!>\!0$ for $0\!\leq\!t\!<\!100$),  the solution of the first NOD converges to $z\!>\!0$ much more rapidly than %the solution 
that of the second NOD. However, when the input switches % to negative 
values %the input becomes negative
($b\!<\!0$ for  $t\!\geq\!100$), the solution of the first NOD gets stuck at a positive value, whereas the solution of the second NOD is able to track the change in input sign. This example reveals a fundamental trade-off between speed/robustness %on the one hand 
(first, dark blue NOD) and flexibility (second, light blue NOD).
%$z\!>\!0$ solution of system 1 gets ``stuck'' whereas that of system 2 moves to $z\!<\!0$. 

Instead of aiming to fine-tune the gain $K_u$ around a hard-to-define fast/robust enough yet flexible enough decision-making behavior, we use mixed-feedback principles to make the system {\it excitable}, inheriting both the speed of large-$K_u$ NODs and the flexibility of small-$K_u$ NODs and imparting system agility. The behavior of the resulting S-NOD is shown in %the third line (pink trace) 
pink in %of
Fig.~\ref{fig:exNOD_vs_NOD}A. By generating ``decision spikes'' the S-NOD is as fast as the high $K_u$ NOD and as flexible as the low $K_u$ NOD. In what follows, we present the S-NOD model, its analysis, and its multi-agent generalization.%sufficiently 
%fast enough decision-making without %sacrificing 
%reducing flexibility, we introduce excitability (\edit{S-NOD}) to continually return the system to its singularity.
%See Fig. \ref{fig:exNOD_vs_NOD}, where an \edit{S-NOD} system 1 %has spikes in its solution and 
%maintains the decision-making speed of NOD system 1 and recovers and improves upon the flexibility of NOD system 2.
 %, which returns the opinion to the neutral solution after every sharp increase in opinion, whether for option 1 or for option, creating a string of spikes that may be positive or negative. Excitablity is introduced with an additional negative feedback to the dynamics \eqref{eq:NOD}.

%\input{figures/fig_excitabilityvsNOD}

%\input{figures/fig_3_new}

%\vspace*{-3cm}
 
%With this understanding of how the system can be tuned by parameter $K_u$ to elicit indecision-breaking behavior. The larger $K_u$ becomes, the more hysteresis is present in the system, making the system more sensitive to small perturbations and cause larger jumps to some different opinion state. 

%** \href{http://www.scholarpedia.org/article/FitzHugh-Nagumo_model}{Scholarpedia} would describe this opinion variable as having ``nonlinearity that allows regenerative self-excitation via a positive feedback'' and I think I like that language. Is it too redundant after explaining things though? 

%\CC{insert $z$ bifurcation diagrams with time series to showcase getting ``stuck" in decisions here}
%%%% EXCITABLE DECISION-MAKING SECTION %%
\section{Agile Decision-Making: S-NOD}
\label{sec: ENOD}
We present and analyze the Spiking Nonlinear Opinion Dynamics (S-NOD) model.  % in Sections \ref{subsec:ENOD definition} and~\ref{subsec:ENOD analysis}. 
%and  analyze its behavior through geometric phase-plane analysis  (Section~\ref{subsec:ENOD analysis}). %We observe and prove the existence of limit cycles in the phase-plane.
%\input{figures/fig_phaseplane}
%a continuous-time model of excitable decision-making dynamics for a single agent forming and modifying its real-valued opinion about two mutually exclusive options in response to observational input. 

\subsection{S-NOD for a Single Agent and Two Options}\label{subsec:ENOD definition}
We define S-NOD by introducing %and coupling 
a slow regulation variable $u_s$ to NOD \eqref{eq:NOD}, as in the fast-positive, slow-negative mixed-feedback structure of excitable systems~\cite{dayan2005theoretical,Balazsi2011,suel2006excitable,fromm2007electrical}:
\begin{subequations}\label{eq:ENOD}
    \begin{gather}
        \tau_z \dot{z} = -d \, z + \tanh \Big( (u_0- u_s + K_u z^2)\cdot \left( a z\right)  + b \Big) \label{eq:z_dot_ENOD} \\
        \tau_{u_s} \dot{u}_s = K_{u_s} z^4 - u_s \label{eq:u_s_dot_ENOD}
    \end{gather}
\end{subequations}
where $\tau_{u_s} \!\gg\! \tau_z$ is larger by at least an order of magnitude %time constant \CC{should we call time constants relaxation times, like FHN?} 
such that $u_s$ responds more slowly than $z$. 
%The gain on the opinion's quartic influence is $K_{u_s}>0$. 
S-NOD as in~\eqref{eq:ENOD} describes dynamics with excitability: a fast positive feedback (mediated by $z$) acts to excite the system, while a slow negative feedback (mediated by $u_s$) regulates it back to near the ultrasensitive pitchfork singularity (as seen in Fig.~\ref{fig:Kcomparison}). 
%\begin{comment} 

The fast positive feedback in~\eqref{eq:z_dot_ENOD} comes from the term $(u_0+ K_u z^2)(a z)$, as in NOD~\eqref{eq:NOD}. %is from the basal attention $u_0$, self-reinforcing terms with coefficients $K_u, a$, and stimulus $b$. In addition to the linear negative feedback on $z$ with damping $d$, 
The slow negative feedback comes from the coupling of the new variable $u_s$. As $z$ grows in magnitude (whether positive or negative, i.e., independent of which option is preferred), $u_s$ also grows, although more slowly, according to~\eqref{eq:u_s_dot_ENOD}. When $u_s$ becomes large, it drives $z$ back towards zero because of the $-u_s(a z)$ term in ~\eqref{eq:z_dot_ENOD}. %Then 
As $z$ decreases towards zero, $u_s$ also decreases back towards zero according to~\eqref{eq:u_s_dot_ENOD}.  %now also drives the opinion to its neutral solution $z=0$. The adaptation variable has its own dynamic with a linear negative feedback on itself, recovering the $u_s$ value to zero when it grows sufficiently large. 
The result is a spike in %both 
$z$ and $u_s$. This is a limit cycle, which bounds $z$ and $u_s$, and is analyzed next.

The S-NOD equations~\eqref{eq:ENOD} can be generalized to $N_o \geq 2$ options analogously to the generalization of NOD to $N_o$ options~\cite{ref:Bizyaeva_NODTunable2023,bizyaeva2023multi}. We will investigate this in future work.

\subsection{Geometric Analysis of Single-Agent, Two-Option S-NOD}\label{subsec:ENOD analysis}

%We consider the set of nonlinear equations for system \eqref{eq:ENOD} describing the opinion $z$ and adaptation variable $u_s$ states and their relationship to one another. 
We use phase-plane analysis to study and illustrate the spiking and decision-making behavior of S-NOD~\eqref{eq:ENOD}. 
To construct the phase-plane, we first compute the nullclines. %of~\eqref{eq:z_dot_ENOD} and~\eqref{eq:u_s_dot_ENOD}.

%first compute the nullclines describing the state-dependent solutions of Equations~\eqref{eq:z_dot_ENOD} and~\eqref{eq:u_s_dot_ENOD}.

%First, consider the $z$-nullcline defined by $\dot{z}=0$ for \eqref{eq:z_dot_ENOD} with states $z, u_s$. 
%We highlight that this equation differs from \eqref{eq:NOD} only by the subtraction of $u_s$. For this reason, we can invoke the results of Proposition \ref{prop:Ku_pitchfork} on the bifurcation behavior of \eqref{eq:NOD} to motivate the bifurcation behavior of \eqref{eq:z_dot_ENOD}. Let $u_0$ be a static parameter and $u_s$ be the bifurcation parameter. Note that the analysis is comparable to that %without the adaptation variable 
%seen in Section~\ref{subsec:NOD analysis}, only shifted and reflected across the $z$-axis. 

% \begin{corollary}\label{cor: u_s^*}
%     Proposition \ref{prop:u_0^*} is analogous to the process by which the singularity $(u_s,  z) \!=\! (u_s^*,0)$, with $u_s^* = u_0 \!-\! \frac{d}{a}$, is found for \eqref{eq:ENOD} and its type identified.
% \end{corollary}

The $z$-nullcline is defined as the solution pairs $(u_s,z)$ that satisfy $\dot z \!=\! 0$ for~\eqref{eq:z_dot_ENOD}. This is %the same as 
equivalent to solving for the equilibrium solutions of~\eqref{eq:NOD} as a function of $u_0$ as in Section~\ref{subsec:NOD analysis}.
%except %here we do so
%now with~\eqref{eq:z_dot_ENOD} as a function of $u_s$. 
Thus, the $z$-nullcline (pink in Fig.~\ref{fig:phase plane}) %corresponds 
is analogous to the bifurcation diagram of~\eqref{eq:NOD}, mirrored about the vertical axis and shifted right by $u_0$, with $u_s^*\!=\!u_0\!-\!d/a \!=\!u_0\!-\!u_0^*$. When $b\!=\!0$, the neutral solution $z\!=\!0$ is a stable (unstable) equilibrium of~\eqref{eq:z_dot_ENOD} for $u_s \!>\! u_s^*$ ($u_s \!<\! u_s^*$). % with $u_s^* \!=\! u_0 \!-\! \frac{d}{a}$
%. The $z$-nullcline is shown in pink in %the phase portraits of 
%Figure~\ref{fig:phase plane} for different values of $u_0$.
%\begin{proposition}\label{prop:Ku_excitable}
%\textit{(Analogous opinion nullcline):}
    %The bifurcation analysis in Proposition \ref{prop:Ku_pitchfork} describes the behavior of the excitable $z$-nullcline with respect to the gain $K_u$.
%\end{proposition}
%\begin{proof}
   % Further, Proposition \ref{prop:Ku_pitchfork} is an analogous study of the $u_s$ variable as bifurcation parameter. The bifurcation analysis holds accordingly, with the bifurcation diagrams described by Fig.~\ref{fig:Kcomparison} matching those of \eqref{eq:z_dot_ENOD} with bifurcation parameter $u_s$, but reflected along the $z$-axis and shifted by $u_0$. This bifurcation diagram describing the relationship between $z$ and $u_s$ is also the excitable $z$-nullcline in the phase-plane.    
%\end{proof}
%The resultant behaviors of the bifurcation diagram should behave accordingly: 1) resultant values are reflected across the $z$ axis due to the subtraction of $u_s$ rather than addition of $u_0$, and 2) resultant values are shifted by $u_0$ on the $u_s$ axis. Larger (smaller) values of $u_0$ shift computations further down the positive (negative) direction of the $u_s$ axis. 
The $u_s$-nullcline (blue in Fig.~\ref{fig:phase plane}) is defined as the solution pairs $(u_s,z)$ that satisfy $\dot{u}_s \!=\! 0$ in \eqref{eq:u_s_dot_ENOD}, which gives the quartic parabola %Thus, it is described by 
%the quartic function 
$u_s \!=\! K_{u_s} z^4$. %This describes 
%a quartic parabola with symmetric solutions about the $u_s$ axis and vertex at $u_s \!=\! 0$. %The increase of parameter $K_{u_s}$ decreases the magnitude of these symmetric solutions, making the $u_s$-nullcline parabola  narrower. 
The larger $K_u$, the more narrow this parabola is.
%As $K_u$ increases, this parabola becomes more narrow.
%The $u_s$-nullcline is shown in blue in %the phase portraits of 
%Fig.~\ref{fig:phase plane} for a fixed value of $K_u$.

%With a sense of the geometry of the $z$- and $u_s$-nullclines, 
%To assess the qualitative behavior of \edit{S-NOD}~\eqref{eq:ENOD}, %in Fig.~\ref{fig:phase plane}. %studying the qualitative behavior of two-variable dynamical systems
%We next examine 
The intersections of the nullclines %, which 
determine equilibrium solutions of S-NOD~\eqref{eq:ENOD} %. %in Fig.~\ref{fig:phase plane} 
and, as shown in Fig.~\ref{fig:phase plane}, %the intersections
depend on the value of $u_0$. % that shifts the $z$-nullcline% and the ratio of timescales %$\frac{\tau_z}{\tau_{u_s}}$
%\tau_z/\tau_{u_s}$
 If $b\!=\!0$, the \textit{neutral solution} $(u_s, z) \!=\! (0,0)$ is always an equilibrium. Two more equilibria, symmetric about $z\!=\!0$, may be present for high enough $u_0$. 

% \begin{corollary}\label{lemma:u0_threshold}
%     The threshold value $u_0^* \!=\! \frac{d}{a}$ defines where  $u_s^*\!=\!0$. For $u_0 \!\geq\! u_0^*$, the $z$- and $u_s$-nullclines are guaranteed to intersect more than once.
% \end{corollary}

Fig.~\ref{subfig:u0_low} depicts the phase-plane when $b\!=\!0$ and $u_0\!<\!u_0^*$. The nullclines have %only
one point of intersection at the neutral solution. The neutral solution is stable. Trajectories will converge to and settle at this point and no excitable behavior in the decision-making will take place. %\IX{Important comment on excitability.}

%Fig.~\ref{subfig:u0_med} illustrates the phase-plane when the nullclines intersect at exactly the equilibrium point $(u_s, z)^{eq} \!=\! (0,0)$ since $u_0 \!=\! u_0^*$. The  $(u_s, z)^{eq} \!=\! (0,0)$ equilibrium point is a saddle node. Trajectories will attract to this point, but with enough noise can be perturbed into a limit cycle and experience brief excitable decision-making behavior. This decision-making limit cycle is described next.

Fig.~\ref{subfig:u0_med} depicts the phase-plane when $b\!=\!0$ and $u_0 \!=\! u_0^*$. The nullclines have three points of intersection, the neutral solution and two unstable equilibria symmetric about $z\!=\!0$. The neutral solution is a saddle-node bifurcation with one exponentially stable eigendirection %(parallel to the $u_s$-axis $z\!=\!0$) 
(along $z\!=\!0$) and one marginally unstable eigendirection %(parallel to the $z$-axis $u_s\!=\!0$)
(along $u_s\!=\!0$). There are two saddle-node-homoclinic (infinite period) cycles, %one diverging upward from the saddle-node and another diverging downward from it.
diverging either upward and downward from the saddle-node. In the absence of noise/exogenous perturbations, all trajectories asymptotically converge to the neutral solution. The presence of any arbitrarily small noise makes the trajectory escape from the neutral solution at random time instants along either the upward or downward saddle-node-homoclinic cycle, leading to large prototypical excursions in the $(u_s,z)$ plane. % in response to even infinitesimal inputs.
%slowly converge toward the saddle-node along $z=0$ and, after a variable, noise-dependent time interval, they rapidly diverge away from it either upward or downward, in a random, perturbation-driven fashion. After divergence, the trajectory converges back to the saddle node under the effect of the slow negative feedback regulation provided by $u_s$ and following the geometry of the $z$-nullcline.

These large prototypical excursions %are highly reminiscent 
resemble of ``spiking'' trajectories of excitable neuronal system. By analogy, we call them ``decision spikes'' or ``excitable decisions''. In contrast to neuronal spikes which happen in only one direction, decision spikes can happen in %multiple directions, i.e., in 
as many directions as there are options. For the one-dimensional two-option dynamics studied here, both upward (in favor of option 1) and downward (in favor of option 2) decision spikes are possible.
%Trajectories will attract to this point, but with enough noise can be perturbed into a limit cycle and experience brief excitable decision-making behavior. This decision-making limit cycle is described next.

Fig.~\ref{subfig:u0_high} depicts the phase-plane when $b\!=\!0$ and $u_0 \!>\! u_0^*$. %When $b\!=\!0$ and $u_0 \! > \! u^*$  (Fig.~\ref{subfig:u0_high}) the 
Three fixed points are unstable and it is possible to prove, along the same lines as~\cite{ref:Ozcimder_ACCdance2016_2investigating}, the existence of two limit cycles, %mirror 
symmetric about the horizontal axis $z=0$. These limit cycles are made of repetitive decision spikes, i.e., spiking decision limit cycles.  Geometric singular perturbation analysis~\cite{krupa2001relaxation,ref:arbelaiz2024excitablecrawling} provides the tools to rigorously prove the existence of these spiking decision limit cycles. Such an analysis goes beyond the scope of this paper. Instead, we leverage  Fig.~\ref{subfig:u0_high} to describe qualitatively a typical oscillatory spiking decision behavior in the presence of small noisy perturbations.

Consider Fig.~\ref{subfig:u0_high} and a trajectory with a large initial condition in $u_s$. Initially, the trajectory rapidly converges to the $z=0$ axis, then slowly slides leftward approaching the %pitchfork singularity of the $z$-nullcline. 
neutral solution. 
As soon as the trajectory hits this %singularity, 
equilibrium, noisy perturbations push it either upward or downward, generating an upward or downward decision spike respectively. The decision spike trajectory brings the trajectory back to the pitchfork singularity, the next decision spike is generated, and the spiking decision cycle continues.

%Trajectories evolve as follows: near the neutral equilibrium,  the system will experience a fast jump to a stable branch of nonzero $z$ solutions. Due to the system's symmetry, this jump may be to either $z\!\neq\!0$ solution branch. The trajectory will quickly move along this $z$ solution branch with increasing $u_s$ value. Once this branch ends, the trajectory quickly converge to the $z\!=\!0$ solution branch, where it slowly moves along towards the neutral equilibrium and the cycle repeats. %As seen in the associated trajectory example of Fig.~\ref{subfig:u0_high}, the opinion $z$ will form a strong value for either decision due to perturbations of noise, triggering the $u_s$ value to swiftly rise and cause the opinion to return to its neutral solution.
%The example trajectory shows the system can form both positive and negative opinions as it is sensitive to noise.

When input %$b \!>\! 0$ (or $b \!<\! 0$) 
$b \!\neq\! 0$ and $|b|$ is sufficiently large, the $z$-nullcline {\it unfolds} accordingly to the universal unfolding of the pitchfork~\cite[Ch.~III]{ref:golubitsky2012singularities1}. Due to the nullcline unfolding, the phase-plane geometry changes qualitatively as shown in Fig.~\ref{subfig:u0_high_b}. Similar geometric singular perturbation analysis methods as those employed for the analysis of Fig.~\ref{subfig:u0_med} and \ref{subfig:u0_high} reveal the existence of a {\it unique} spiking decision limit cycle associated to spiking decisions toward the option favored by the inputs (%option 1,
e.g., upward decision spikes in the case of Fig.~\ref{subfig:u0_high_b} where $b>0$ provides evidence in favor of option 1).

Observe that in the presence of informative inputs (Fig.~\ref{subfig:u0_high_b}), the decision spiking frequency is higher than in the case of endogenous decision spiking oscillations (Fig.~\ref{subfig:u0_high}). This feature is similar to spike frequency indicating input intensity in %biological 
neural systems. In applications like robot navigation, $u_0$ can be controlled to avoid endogenous spiking.
%be small ($u_0<u_0^*$) when inputs $b$ should be rejected (and no spiking is warranted) and large when spiking should reflect the information in $b\neq 0$. 

%We highlight the frequency difference in solutions illustrated in Fig.~\ref{subfig:u0_high} and ~\ref{subfig:u0_high_b}. As the system unfolds with higher magnitude $b\!\neq\!0$, the frequency of solution spikes increases. This is due to the narrowing of the limit cycle as the $z\!=\!0$ solution branch shortens along the multi-stability region of the $z$-nullcline. This region, as previously discussed, will also narrow as $K_u$ decreases. 
%sort of untrue - depends on u_0 shift and K_u too. 

%%%% MULTI-AGENT SECTION %%%%%%%%%%%%%%%
%%%% ROBOT NAVIGATION SECTION %%%%%%%%%%%
\section{Agile Multi-Agent Decision-Making: S-NOD}\label{sec: robot nav}

\subsection{{S-NOD for Multiple Agents and Two Options}}
\begin{figure}
    \centering
    \subfloat{{\includegraphics[ width=.5\columnwidth]{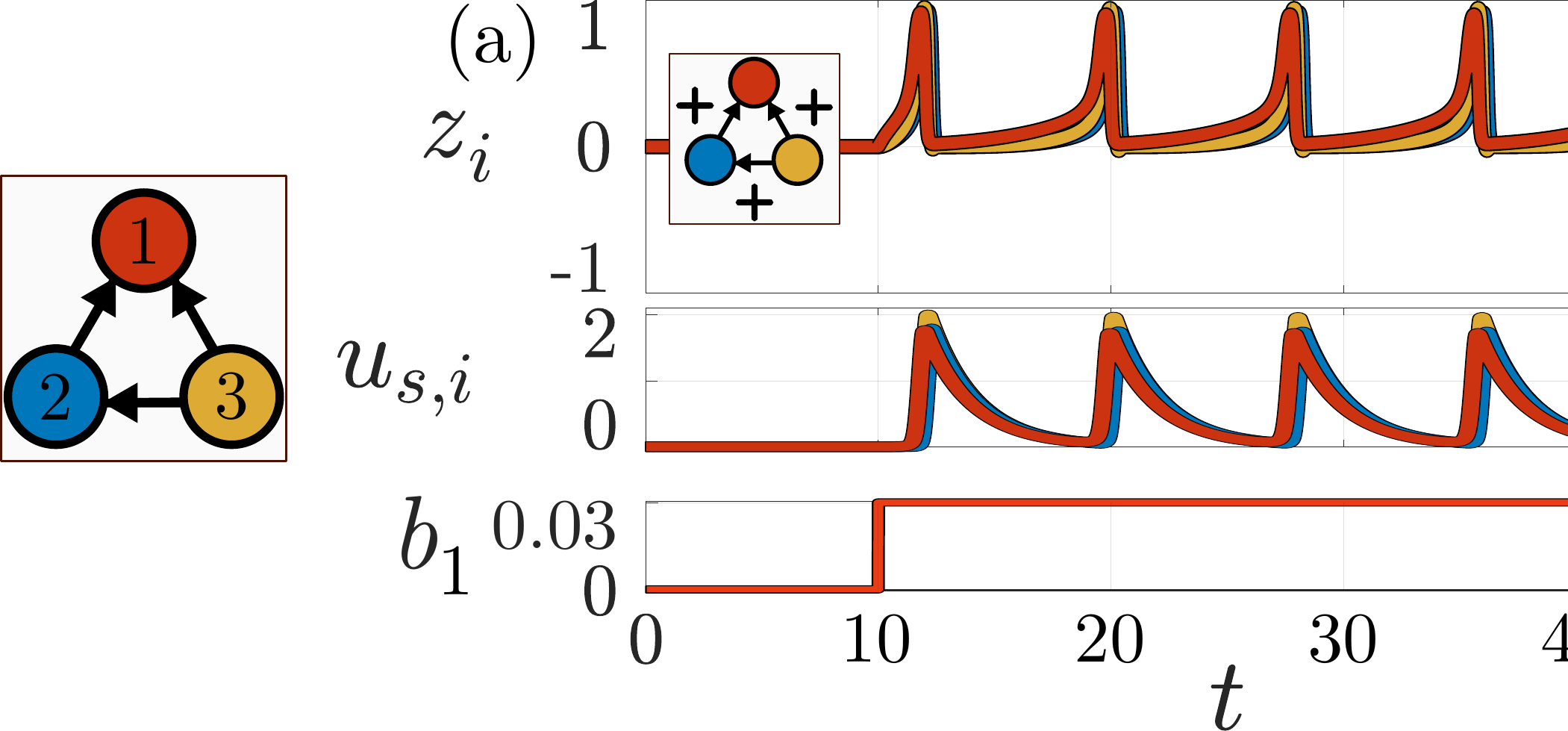}}\label{subfig:multagent_agreement}}
    \subfloat{{\includegraphics[ width=.5\columnwidth]{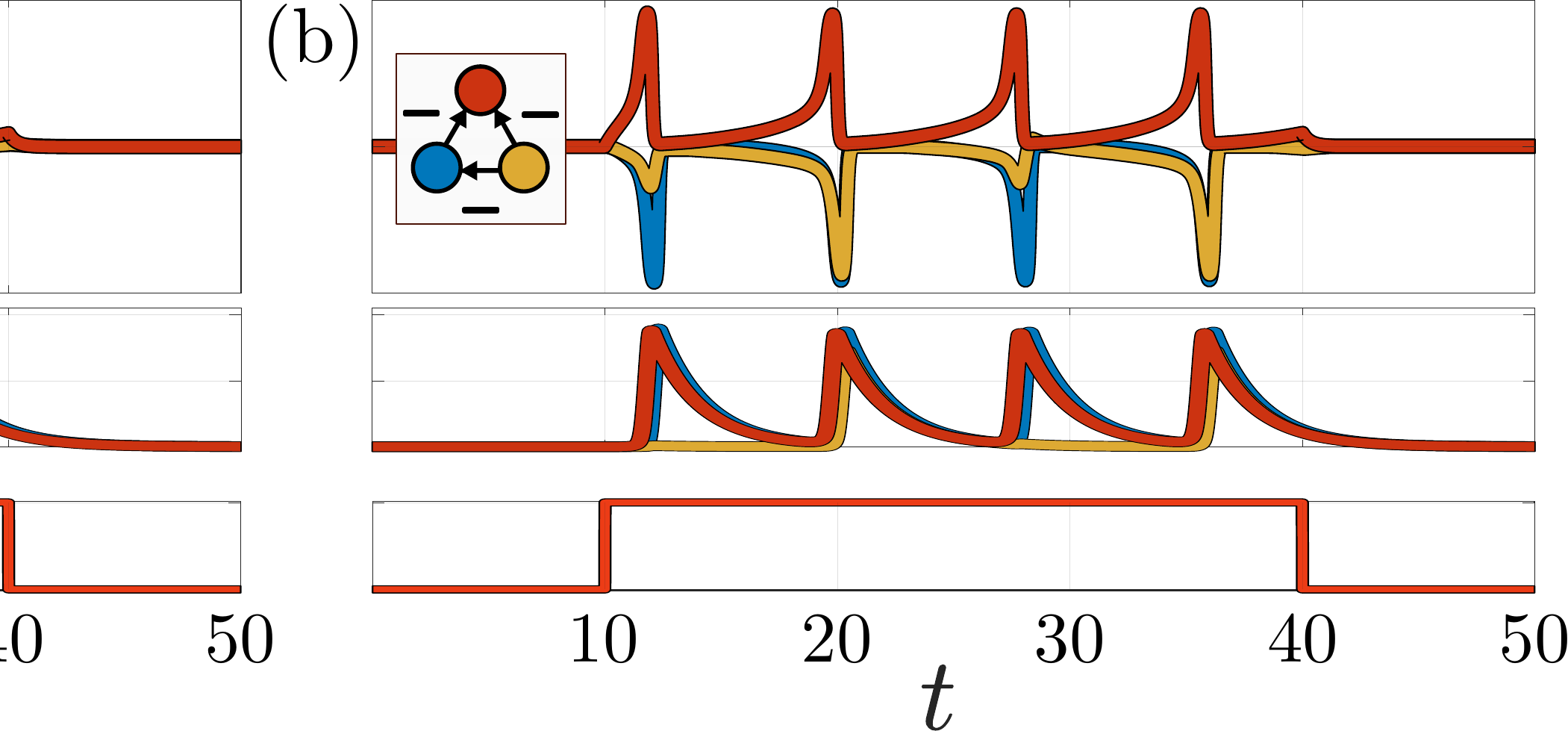}}\label{subfig:multagent_disagreement}}
    \caption{Response of three agents with the same communication network but for opposite signs on graph edge weights and an input $b_1$ applied only to agent $1$. %Weight .  
  For $i\neq k$, (a) $a_{ik}= + 0.1$ and (b) $a_{ik}= - 0.1$. Parameters are $a_{ii}=1$, $d=1$, $K_u=2.3$, $K_{u_s}=16$, $u_0=0.9$, $\tau_{u_s}/\tau_{z}=20$.}
    \label{fig:multagent}
\end{figure}
We can generalize the single-agent S-NOD equations~\eqref{eq:ENOD} to the case of $N_a$ agents in the same way that NOD generalizes to $N_a$ agents \cite{ref:Bizyaeva_NODTunable2023}. The multi-agent NOD models the decision-making process of multiple agents sharing and influencing one another's opinions over a communication network. Examples include agents choosing how they distribute their time over two resource patches or deciding whether to move to the left or right when navigating a cluttered space, all while integrating information from other agents' opinions. In the multi-agent S-NOD, each agent $i$ has two state variables $z_i$ and $u_{s,i}$ with dynamics given by 
% \begin{subequations}\label{eq:MA_ENOD}
%     \begin{gather}
%     \textstyle
%         \dot z_i \!=\! -d \,z_i \!+\! \tanh \!\left(\!(\!u_0\!-\!u_{s,i}\!+\!K_{u}z_i^2)\!\left(\!\sum_{k=1}^{N_a} \!a_{ik}z_k \!+\! b_i \!\right)\! \! \right) \label{eq:MA_NOD}\\
%     \dot u_{s,i} = -u_{s,i} + K_{u_{s}}z_i^4 \label{eq:MA_u_s_dot}
%     \end{gather}
% \end{subequations}
\begin{subequations}\label{eq:MA_ENOD}
\begin{gather} 
        \begin{adjustbox}{max width=0.88\columnwidth}
        $
          \textstyle \tau_z \dot{z}_i \!=\! -d \,z_i \!+\! \tanh \!\left(\!(\!u_0\!-\!u_{s,i}\!+\!K_{u}z_i^2)\!\left(\!\sum_{k=1}^{N_a} \!a_{ik}z_k \!\right) +\!b_i\! \right) \label{eq:MA_NOD}
        $
        \end{adjustbox}
\\
    \tau_{u_s} \dot{u}_{s,i} = -u_{s,i} + K_{u_{s}}z_i^4 \label{eq:MA_u_s_dot}
\end{gather}    
\end{subequations}
\iffalse
\begin{subequations}\label{eq:MA_ENOD}
\begin{gather} 
        \textstyle \tau_z \dot{z}_i \hspace{-0.2em}=\hspace{-0.2em} -d \,z_i \hspace{-0.14em}+\hspace{-0.14em} \tanh \hspace{-0.2em}\left(\hspace{-0.2em}(\hspace{-0.1em}u_0\hspace{-0.2em}-\hspace{-0.2em}u_{s,i}\hspace{-0.2em}+\hspace{-0.2em}K_{u}z_i^2)\hspace{-0.3em}\left(\hspace{-0.2em}\sum_{k=1}^{N_a} \hspace{-0.2em}a_{ik}z_k \hspace{-0.2em}+\hspace{-0.2em} b_i \hspace{-0.2em}\right)\hspace{-0.25em} \right) \label{eq:MA_NOD}\\
    \tau_{u_s} \dot{u}_{s,i} \hspace{-0.15em}=\hspace{-0.15em} -u_{s,i} \hspace{-0.11em}+ \hspace{-0.11em}K_{u_{s}}z_i^4 \label{eq:MA_u_s_dot}
\end{gather}    
\end{subequations}
\fi
where $A \!=\! [a_{ik}]\in\mathbb{R}^{N_a \times N_a}$ is the S-NOD network adjacency matrix, capturing both the strength ($a_{ii}\geq 0$) of a self-reinforcing term
%(diagonal elements) 
 and the strength ($|a_{ik}|$) of the influence of the opinion of agent $k$ on the opinion of agent $i$. %The sign of $a_{ik}$ determines the kind of inter-agent influence: 
 If $a_{ik}$ is positive (negative), then an opinion of agent $k$ in favor of one of the options influences an opinion of agent $i$ in favor of the same (other) option.  %inter-agent opinion exchange terms 
%(out-diagonal elements)
%($k\!\neq\!i$).
%, whose diagonal entries represent self-reinforcement of opinions (the $a$ in \eqref{eq:z_dot_ENOD}) and the non-diagonal entries the influence of agents on each other. 
We assume homogeneous agents, i.e., all agents have the same $d$, $ u_0$, $K_u$ and $K_{u_s}$. 
%Notice that~\eqref{eq:MA_ENOD} is exactly the networked, distributed version of~\eqref{eq:ENOD}.
S-NOD as presented in~\eqref{eq:MA_ENOD} is the networked, distributed version of %that in
\eqref{eq:ENOD}. %Equation (\ref{eq:MA_ENOD}) corresponds to the simple multi-agent NOD with the single additional term $-u_{s,i}$.}

In Fig.~\ref{fig:multagent} we simulate the opinion dynamics of~\eqref{eq:MA_ENOD} for three agents in a network when agent 1 receives a constant input $b_1>0$ for $t \in [10,40]$.  
%an agent to a relatively strong input and the propagation of information through a small network. 
%The structure of the communication network determines how information flows between agents. 
For the loop-free networks of Fig.~\ref{fig:multagent}, we see that when the weights $a_{ik}$ are positive (negative) the spiking of agent 1 for option 1 triggers synchronized (anti-synchronized) spiking of agents 2 and 3.  
%if $a_{ik}$ is positive (negative) and sufficiently large, an opinion spike of agent $k$ in favor of one of the options influences an opinion spike of agent $i$ in favor of the same (other) option. 
Future work will characterize different possible behaviors, e.g., opinion spike (anti)synchronization, for classes of network structures.
%\NL{Explain what this model does with the constant input simulations. Make sure to say that the edge  weights are taken to be equal to 1. Hint at what/how we might prove things.} 

\subsection{Application to Social Robot Navigation}
\begin{figure}
    \centering
    \subfloat{\includegraphics[width=.333\columnwidth]{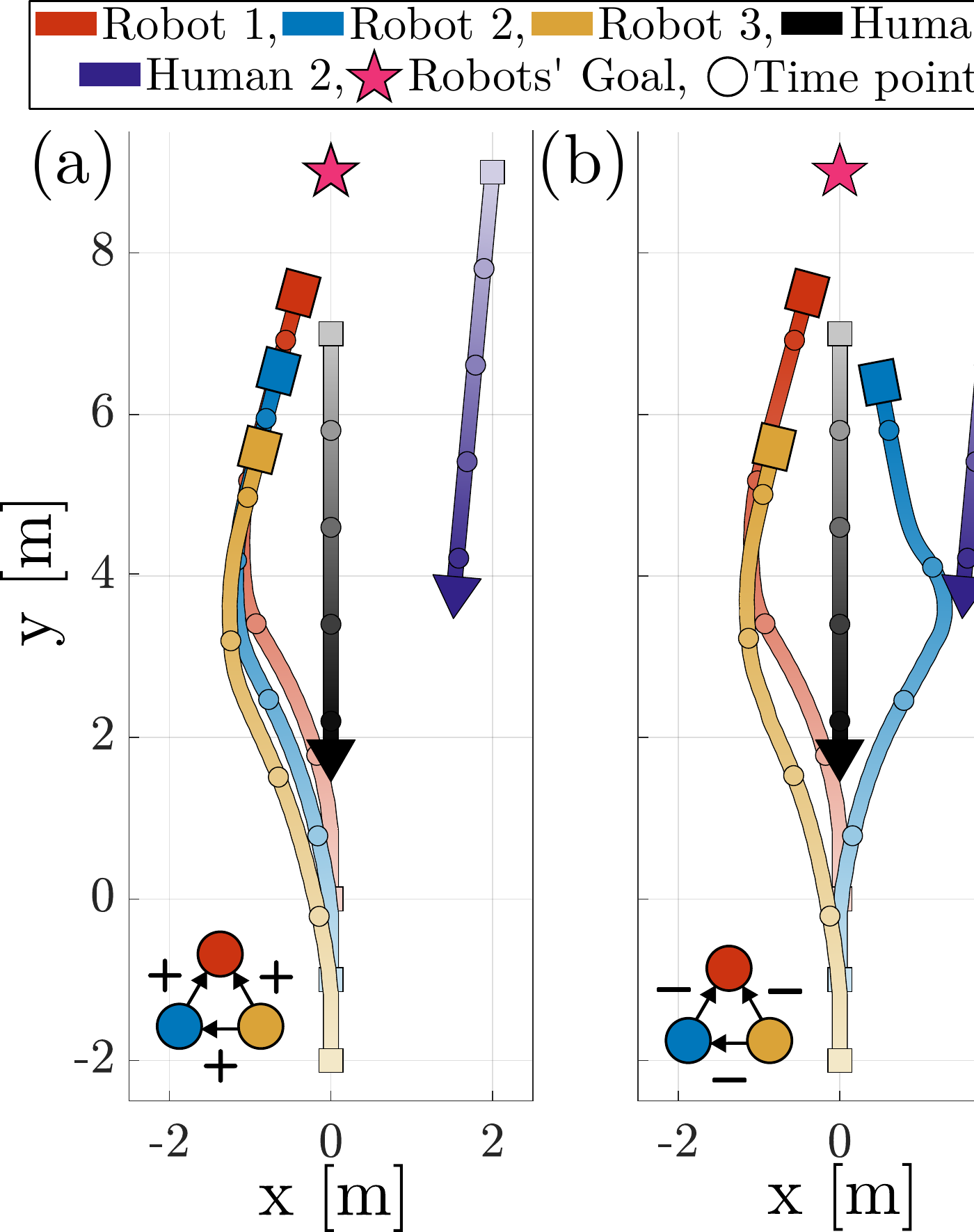}\label{subfig:robot_traj_agreement}}
    \subfloat{\includegraphics[width=.333\columnwidth]{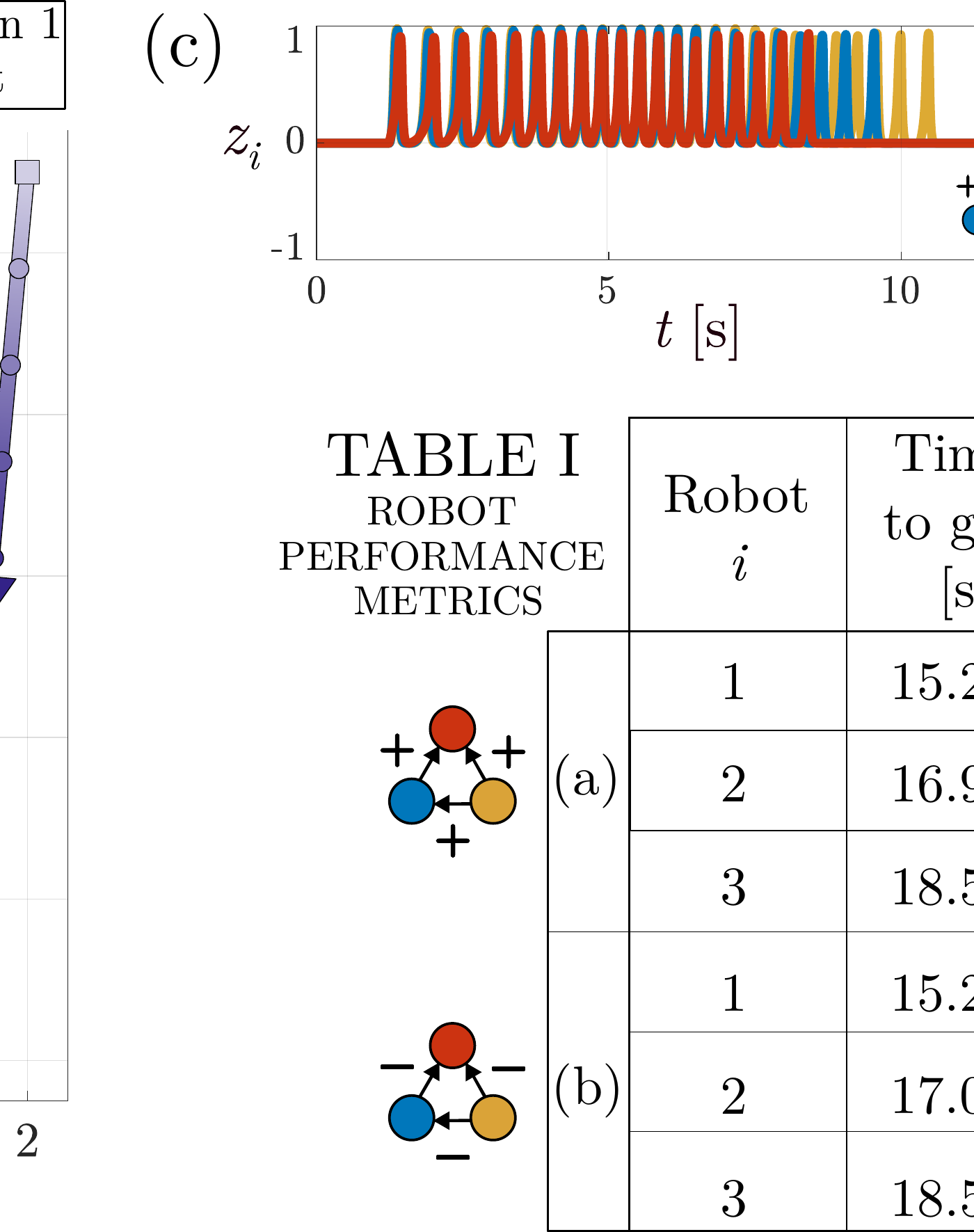}\label{subfig:robot_traj_disagreement}}
    \subfloat{\includegraphics[width=.333\columnwidth]{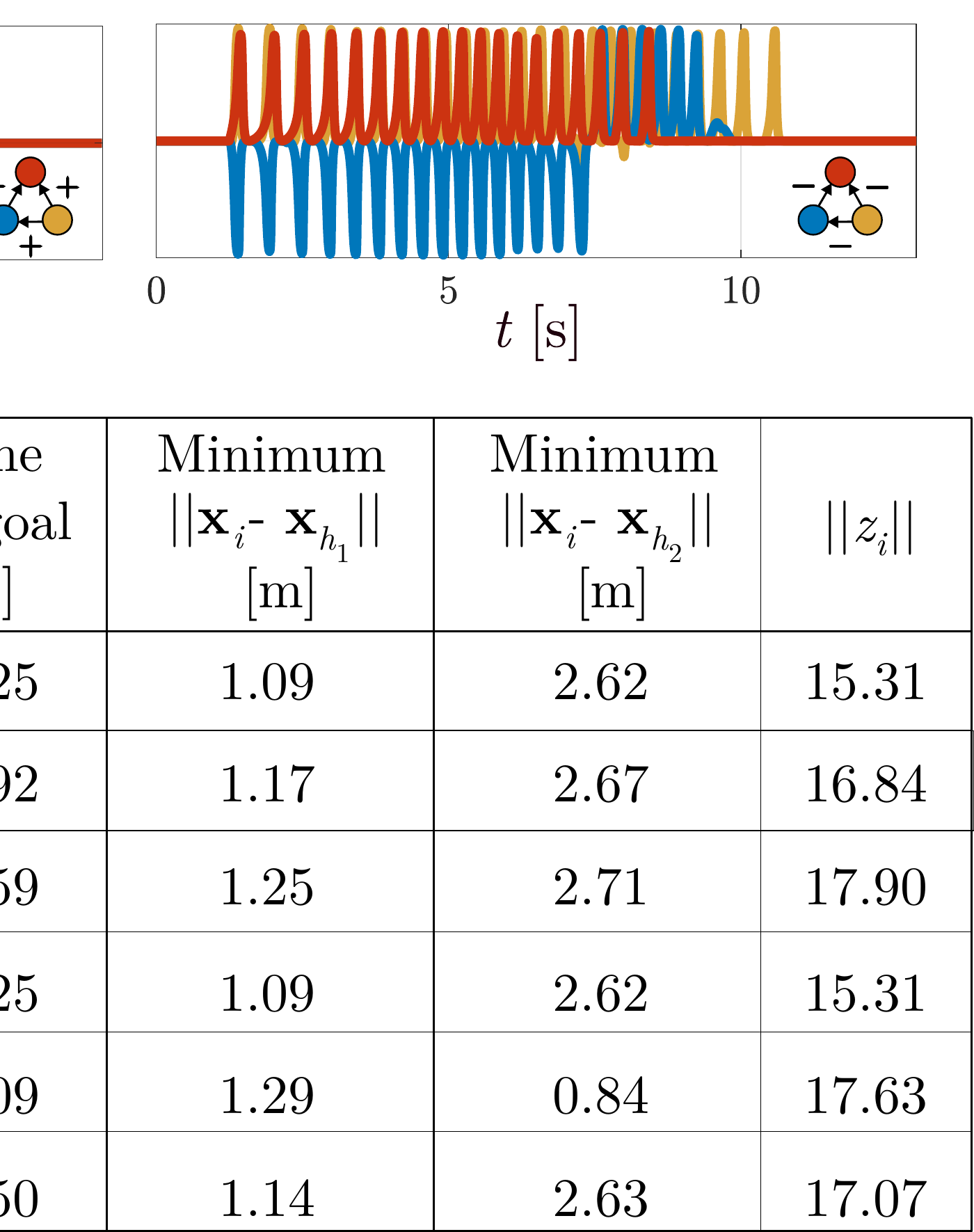}\label{subfig:robot_traj_opinions}}
    \caption{Trajectories of social robot %leader-follower 
    teams navigating around approaching human movers. Communication network and parameters are as in Fig.~\ref{fig:multagent} with (a) $a_{ik}= + 0.1$  and (b) $a_{ik}= - 0.1$. (c) Plots of $z$ over time $t$ for the robots. Table I lists performance metrics of the robots.}
    \label{fig:robot_trajectories}
\end{figure}
%We use \edit{S-NOD}~\eqref{eq:MA_ENOD} to design \edit{decentralized} agile control %governing the navigation of 
%for social robots navigating in 2D around oncoming human movers.
We use S-NOD~\eqref{eq:MA_ENOD} to design a decentralized, agile controller for social robots navigating around oncoming human movers in 2D. 
Each robot has a nominal control that steers it toward its goal by regulating its heading direction through proportional negative feedback. %This negative feedback is mediated by 
The S-NOD state $z_i$ defines the strength of robot $i$'s preference for turning left ($z_i\!>\!0$) or right ($z_i\!<\!0$). %The S-NOD state 
A term mediated by $z_i$ is added as positive feedback to the control on steering behavior. This overcomes negative feedback regulation and promotes fast reactive steering when possible collisions with oncoming human movers are imminent. %it in a straight path towards its goal location. 
%Robots follow a nominal straight-line path towards their goals and 
%S-NOD makes coordinated agile decisions for when and how to deviate from this path to avoid collisions. Opinion} $z_i$ in~\eqref{eq:MA_ENOD} defines the strength of robot $i$'s preference for turning left ($z\!>\!0$) or right ($z\!<\!0$). %relative to the straight path to its goal. %($z\!=\!0$)
%\edit{A term proportional to $z_i$ gets added to the robot's nominal control to overcome the nominal controller through positive feedback and rapidly steer around possible collisions.} 
%navigating a social robot around oncoming human movers to embody its agile decision-making for real-time flexibility in motion.
Simulations of the resulting navigation behavior are in Fig.~\ref{fig:intro} and~\ref{fig:robot_trajectories} and  animated at \textcolor{NavyBlue}{\small{\url{https://spikingNOD.github.io}}}. %\CC{In simulation, \edit{S-NOD} control creates agile, responsive robot trajectories to various passing scenarios with a human.}

%robot goes toward goal, coordinate, not collide with humans

 %due to others using opinion exchange and the humans by avoiding collisions. 
 %use while avoiding colliding with humans. 
  
   To anticipate collisions, robot $i$ can estimate its distance ($\rho_i$) to a human, bearing angle ($\eta_i$) on the human, and angle ($\eta_{h,i}$) between the human's heading direction and the robot-human vector. Robots can exchange steering opinions over a communication network as in \eqref{eq:MA_NOD}.
   %To respond to impending collisions with human movers, each robot $i$ is assumed able to estimate over time $t$ (i) its distance to the human $\rho_i$, (ii) its bearing angle on the human $\eta_i$, and (iii) the angle between the human's heading direction and the direction of the robot-human vector $\eta_h$. 
   The robot's attention grows above its basal level $u_0$ as collision risk grows with decreasing $\rho_i$ and $\eta_i$. This increases the strength of (i) the positive feedback loop of the steering controller, and (ii) the interactions with other robots to achieve coordinated obstacle avoidance. %,  Each robot increases its basal attention $u_0$ as a function of growing likelihood of colliding with a human as determined by proximity $\rho_i$ and bearing $\eta_i$. 
   %Thus, a robot forms an opinion according to~\eqref{eq:MA_ENOD} only when it anticipates colliding with a human.  
   Thus, each robot's steering opinion deviates from navigating toward its goal dependent on $\rho_i$, $\eta_i$, $\eta_{h_i}$, and on other robot opinions. 
   %For each robot}, w
   %More precisely, 
   We let $\hat{z}_h(t) \!=\! \tan(\eta_h(t))$ be a proxy for the human's %(unknown) heading 
   opinion at time $t$ and add $\hat{z}_h$ to the term $\sum_{k=1}^{N_a} \!a_{ik}z_k$ in \eqref{eq:MA_NOD}. %We assume r
    Coordination among robots derives from the sign of $a_{ik}$: when $a_{ik}\!>\!0$ ($a_{ik}\!<\!0$), robot $i$ is influenced to make a similar (opposite) steering choice as robot $k$. We let $|a_{ik}|$ decay with growing distance between robots $i$, $k$. % separate in space.}

Fig.~\ref{fig:intro} compares the trajectory and opinion of a robot using NOD (adapted from \cite{ref:cathcart2023proactive}) and S-NOD~\eqref{eq:ENOD} models to navigate around a human mover. The S-NOD robot passes the human with a minimum distance of 0.96m and arrives at its goal in 14.4 seconds. Without the return to the sensitive bifurcation point that S-NOD provides, the NOD robot's opinion change lags as the human makes a sharp turn and the robot experiences
a collision (moving closer than 0.3m to the human) less than 6 seconds into the simulation.
%Initially the human appears to pass on the robot's left, so both robots bear right (with $z\!<\!0$). When the human abruptly turns to pass on the robot's right, NOD lags to adapt and bear left ($z\!>\!0$), whereas \edit{S-NOD} allows for a quick and agile change in heading direction. Only the \edit{S-NOD} controlled robot is able to turn left soon enough to navigate around the human mover. 

Fig.~\ref{fig:robot_trajectories} showcases three robots navigating towards a common goal and around two approaching humans using multi-agent S-NOD~\eqref{eq:MA_ENOD}. The communication networks are those in Fig.~\ref{fig:multagent} and the spiking behavior of the robots in Fig.~\ref{subfig:robot_traj_opinions} %left and right 
is similar in many ways to the spiking behavior in Fig.~\ref{subfig:multagent_agreement} and~\ref{subfig:multagent_disagreement}.
%All parameters are held consistent through the simulations, aside from the sign of the edge-weights of the communication network (sign of $a_{ik}$) of observers.
%The spiking of the robot opinions in Fig.~\ref{~\ref{fsubfig:robot_traj_agreement}} Fig.~\ref{subfig:robot_traj_disagreement} is similar in many respects to that of the agents in Fig.~\ref{fig:multagent}b. %with notable differences due to application. 
%Opinions $z_i$ propagate through the team, with one robot deviating in opinion from the other two. 
However, due to the distance dependent $|a_{ik}|$ of Fig.~\ref{subfig:robot_traj_disagreement}, robot 3 prefers the same opinion as robot 1, whereas due to the constant $|a_{ik}|$ of Fig.~\ref{subfig:multagent_disagreement}, agent 3 prefers the opinion of agent 2. %The regulation variable $u_s$ increases as robots approach humans and $u_0$ rises, accelerating more frequent opinion spikes. 
Notably in Fig.~\ref{subfig:robot_traj_disagreement}, robot 2 switches from an initial right turn (opposite to robots 1 and 3) to avoid human 1 to a left turn to avoid human 2. S-NOD gracefully navigates robot 2 out of consecutive potential collisions, implementing different turning opinions. This embodies the agility and sequential decision-making feature of S-NOD.
\section{Final Remarks}
We presented and analyzed Spiking Nonlinear Opinion Dynamics (S-NOD) for a single agent and two options. We showcased the ability of S-NOD to swiftly form opinions and regulate back to ultrasensitivity. S-NOD provides first-of-its-kind \textit{two-dimensional} excitable (spiking) dynamics for agile decision-making over two options. %Parameter 
%$K_u$ was identified as a cause for 
We showed how NOD can become too robust, but the self-regulation of S-NOD %was shown to recover 
recovers flexibility. %\edit{and enables adaptive event-based decision-making}. 
We analyzed existence of limit cycles for certain parameter regimes in S-NOD. % was found and examined. 
We presented S-NOD for multiple agents that communicate opinions over a network %. We presented a multi-agent \edit{S-NOD} model 
and highlighted %limit cycles and 
potential for agent (anti)synchronization. %\edit{and anti-synchronization}. 
We illustrated %the agility of 
S-NOD's agility in a social robot navigation application and plan to implement on physical robots. We aim to provide analytical guarantees on the onset of periodic spiking in limit cycles, to analyze synchronization patterns for multiple agents, and to generalize to multiple options. % in the same way NOD generalizes.} %We will also implement \edit{S-NOD} control on physical robots in social navigation experiments.

%\edit{S-NOD} derives through the introduction of a single extra term to the previously presented Nonlinear Opinion Dynamics (NOD), which have been shown to provide fast and flexible multi-agent behavior. The extra term is inspired by the fast-positive, slow-negative mixed-feedback structure of excitable systems. The agile behaviors brought about by the new excitable nature of decision-making driven by \edit{S-NOD} are analyzed in a general setting and  illustrated in an application to robot navigation around human movers.

\bibliographystyle{IEEEtran}
\bibliography{references}

\end{document}